\pgfplotsset{compat=newest}
\pgfplotsset{plot coordinates/math parser=false}
\newtheorem{lemma}{Lemma}
\newlength\tindent
\renewcommand{\indent}{\hspace*{\tindent}}
\def\BibTeX{{\rm B\kern-.05em{\sc i\kern-.025em b}\kern-.08em
  T\kern-.1667em\lower.7ex\hbox{E}\kern-.125emX}}
\newtheorem{definition}{Definition}  
\begin{document}
\title{A Downlink Puncturing Scheme for Simultaneous Transmission of URLLC and eMBB Traffic by Exploiting Data Similarity}

\author{Mohammed AL-Mekhlafi, Mohaned Chraiti, Mohamed Amine Arfaoui, \\Chadi Assi, Ali Ghrayeb, and Amira Alloum \thanks{M. AL-Mekhlafi, Mohamed Amine Arfaoui and C. Assi are with the Concordia Institute for Information Systems Engineering, Concordia University, Montreal, QC, Canada
(e-mail: {m$\_$almekh@encs.concordia.ca}; {m$\_$arfaou@encs.concordia.ca}; {assi@ciise.concordia.ca}).}
\thanks{M. Chraiti is with the LIDS lab, Massachusetts Institute of Technology (MIT), Cambridge, Massachusetts, USA
(email: mchraiti@mit.edu.}
\thanks{A. Ghrayeb is with Texas A \& M University at Qatar, Doha, Qatar (e-mail:
{ali.ghrayeb@qatar.tamu.edu}).} 
\thanks{A. Alloum is with Huawei, Paris, France (email: amira.alloum@huawei.com.)}}

	
	\maketitle
	
	\thispagestyle{empty}
	\begin{abstract}
		Ultra Reliable and Low Latency Communications (URLLC) is deemed to be an essential service in 5G systems and beyond (also called 6G) to accommodate a wide range of emerging applications with stringent latency and reliability requirements. Coexistence of URLLC alongside other service categories calls for developing spectrally efficient multiplexing techniques. Specifically, coupling URLLC and conventional enhanced Mobile BroadBand (eMBB) through superposition/puncturing naturally arises as a promising option due to the tolerance of the latter in terms of latency and reliability. The idea here is to transmit URLLC packets (typically sporadic and of short size) over resources occupied by ongoing eMBB transmissions while minimizing the impact on the eMBB transmissions. In this paper, we propose a novel downlink URLLC-eMBB multiplexing technique that exploits possible similarities among URLLC and eMBB symbols, with the objective of reducing the size of the punctured eMBB symbols. We propose that the base station (BS) scans the eMBB traffic' symbol sequences and punctures those that have the highest symbol similarity with that of the URLLC users to be served. As the eMBB and URLLC may use different constellation sizes, we introduce the concept of symbol region similarity to accommodate the different constellations. We assess the performance of the proposed scheme analytically, where we derive closed-form expressions for the symbol error rate (SER) of the eMBB and URLLC services. {We also derive an expression for the eMBB loss function due to puncturing in terms of the eMBB SER}. We demonstrate through numerical and simulation results the efficacy of the proposed scheme where we show that 1) the eMBB spectral efficiency is improved by puncturing fewer symbols, 2) the SER and reliability  performance of eMBB are improved, and 3) the URLLC data is accommodated within the specified delay constraint while maintaining its reliability, 4) and the proposed strategy has polynomial time complexity making it an efficient solution to be used in practice.     
	\end{abstract}
	\textbf{\textit{Keywords---}}{eMBB, multiplexing, puncturing, URLLC, 5G and beyond, 6G. }
	
	\vspace{-.1 in}
	\section{Introduction}\label{sec:introduction}

\subsection{Motivation}
5G and beyond systems are anticipated to provide a variety of service classes with different requirements in terms of latency, reliability and connectivity \cite{saad2019vision,8808168,bennis_risk}. This naturally raises concerns about their coexistence, especially after it has been shown that allocating a dedicated bandwidth for each service is not spectrally efficient \cite{8403963}. In particular, providing a dedicated bandwidth for ultra reliable and low latency communications (URLLC) class of service has been shown to be poorly efficient where the effectively used bandwidth could be less than 5\% of the total allocated resource. This is mainly due to URLLC traffic characteristics and requirements \cite{8491078}. In fact, the URLLC services come with stringent requirements in terms of latency (less than one millisecond) and reliability (packet error less than $10^{-6}$), implying that such services require immediate availability of spectral resources \cite{maaz2018urllc,bennis_risk}. Meanwhile, given the sporadic characteristic of URLLC traffic and their short packet size, the allocated resources will only be used occasionally and for a short period \cite{saad2019vision,8808168,8472907,park2020extreme}. Therefore, on-demand resource allocation for URLLC transmissions is deemed a promising solution to make good use of spectral resources.\\
\indent Aligning with the concept of on-demand allocation of resources for URLLC communications, the 3GPP standard has suggested using superposition/puncturing for multiplexing URLLC and enhanced Mobile BroadBand (eMBB) services in 5G networks \cite{Meeting87,pedersen2017punctured}. As time is divided into slots, and each slot consists of several mini-slots \cite{7842312}, the main idea of the superposition/puncturing framework is to transmit URLLC packets in mini-slot basis, upon their arrival, over the resources occupied by ongoing service type transmissions. Specifically, the eMBB traffic shares the time-frequency resources within each slot, which can be based on the channel states of the eMBB traffic. To accommodate the URLLC traffic of the tight latency constraints, the arriving URLLC packets are immediately scheduled in the next mini-slot on top of the ongoing eMBB transmissions. If the BS  allocates transmission power for both eMBB and URLLC traffic, then it is referred to as superposition. If the BS chooses zero transmission power for the eMBB traffic, then this is referred to as puncturing \cite{pedersen2017punctured}.

  \subsection{Related Work}

Superposition/puncturing is considered a promising option to allocate the URLLC traffic due to the tolerance of the latter in terms of latency and reliability. Hence, much work \cite{8476595,8638930,8932425,8663990,joint,8643428,8746407,9171342} focused on developing techniques based on coupling URLLC and conventional (eMBB) data transmission through superposition/puncturing.  In \cite{8476595,8638930,8932425,8663990,joint,8643428,8746407,9171342}, the authors investigated and developed novel superposition/puncturing approaches aiming to minimize the impact on eMBB in terms of the contaminated symbols. The advantages of using  superposition for sharing resources in uplink communications between eMBB,
mMTC, and URLLC devices was studied  in \cite{8476595}. For URLLC downlink MIMO-NOMA,  network layer performance bounds
and cross-layer power control were studied in \cite{8638930}.
A max-matching diversity (MMD) algorithm was proposed in \cite{8932425} to allocate eMBB users, considering both heterogeneous orthogonal and non-orthogonal multiple access network slicing strategies. A machine learning approach for hybrid multiple access solution (HMA) was proposed in \cite{8663990}. In fact, the classical methods of NOMA, such as power-domain, require perfect channel state information (CSI) at the BS such that the transmitted signal can be separated at the receiver with successive interference cancellation (SIC)\cite{Chraiti_NOMA}. A new class of NOMA, namely bits similarity NOMA, was proposed in \cite{Chraiti_NOMA}. It was shown that, without perfect CSI, bit-similarity NOMA can achieve better spectral efficiency and fairness among users compared to traditional NOMA techniques. In fact, superposition techniques (NOMA) cause severe degradation to the URLLC reliability because the eMBB signal acts as an interference signal that increases the decoding errors of the URLLC traffic. Moreover, the lack of the URLLC CSI at the transmitter decreases the chances to superpose URLLC traffic on eMBB \cite{8612914}. Thus, puncturing is preferred as it conserves URLLC reliability.

In order to study the impact of puncturing  eMBB resources to accommodate URLLC transmission, the authors in \cite{joint} studied the problem of joint scheduling of eMBB and URLLC data transmission according to linear, convex and threshold models of the eMBB rate loss associated with the eMBB resources puncturing. A risk-sensitive approach was introduced in \cite{8643428} to mitigate the risk of puncturing eMBB resources. A resource allocation scheduler was proposed in \cite{8746407} where the formulated problem considered the overhead associated with the URLLC load segmentation while maximizing the rate utility. A null-space-based spatial puncturing scheduler for joint URLLC/eMBB traffic was proposed in \cite{9187217}. The authors in \cite{manzoor2020contract} formulated a URLLC traffic allocation problem by adopting a superposition or puncturing scheme. Practically, when the URLLC service is initiated in the middle of the eMBB transport block, part of eMBB symbols are replaced by and/or superposed with the symbols of the URLLC packet. Accordingly, the reception quality of the eMBB services could be degraded severely.\\ 
\indent  
Since eMBB tolerates delays, eMBB users can rely on long error-correction codes in combination with re-transmission techniques to compensate for the loss incurred by superposition/puncturing. \textit{Retransmission-based puncturing} slows the eMBB traffic, and it requires more overhead including puncturing indicator (PI) to inform the eMBB user of the punctured resources, while the whole information block can be re-transmitted if decoding errors occur. Therefore, researchers have been thinking about using codes (\textit{code-based puncturing}) to correct the erroneous symbols in the eMBB message and hence avoiding retransmissions and high overhead signal\cite{8403963,pedersen2017punctured}. Particularly, the gain achieved by retransmission based puncturing over \textit{code-based puncturing} is moderate and less than 10\% \cite{8403963,pedersen2017punctured}. Moreover, indicator-free scheme including a transmit precoding with blind detection is proposed for resource overhead reduction \cite{8891466}. In general, the more punctured eMBB symbols there are, the higher the number of erroneous eMBB symbols and the lower the code rate ( of the error correction code) we get, which subsequently results in low spectral efficiency. In this paper, we aim to reduce the number of contaminated eMBB symbols for the uncoded system and hence the possibility to enhance the code rate ( of the error correction code) of the eMBB and then the spectral efficiency.


\subsection{Contributions}
{In this work, we are motivated to satisfy QoS of both eMBB and URLLC services in 5G and beyond 5G systems. Therefore, we seek to develop a puncturing strategy such that the impact on the punctured eMBB symbols is minimized, which should essentially lead to better eMBB QoS and spectral utility. In other words, we aim at devising a puncturing strategy that can decrease the impact of simultaneous transmissions of URLLC and eMBB traffic on the eMBB traffic. Hence, there is no need to inform the eMBB users about punctured resources, i.e., avoid transmitting costly and unnecessary puncturing indicator signal. The contributions of the proposed downlink puncturing strategy are summarized as follows: } 
\begin{itemize}
    \item Different from existing works, we exploit the possible similarity among the URLLC-eMBB symbols instead of random allocation. Indeed, upon the arrival of a URLLC packet, the BS scans the ongoing eMBB transmissions and selects the one that maximizes the number of similar symbols between the two services. In fact, increasing the similarity between the eMBB-URLLC symbols effectively reduces the impacted eMBB symbols and hence the possibility to enhance the used error correction code rate and then the spectral efficiency.
    \item While developing the proposed technique, we consider the  case where an eMBB user could have different symbol constellations than that of URLLC users. Accordingly, we introduce the so-called similarity region to evaluate the similarity between the eMBB and URLLC with different constellations. We describe in detail the encoding and decoding processes for both eMBB and URLLC traffic.
    \item  Taking into consideration the symbol errors occurring due to the channel impairment and the puncturing process, we derive a closed-form expression for the symbol error rate (SER) of the eMBB traffic. The expression shows that the SER of the eMBB traffic depends on the signal-to-noise ratio (SNR), the average URLLC load, and the average similarity.  We also consider the SER to measure the reliability of the URLLC traffic, as conserving the SER of the URLLC preserves the minimum packet error rate. Moreover, other reliability improvement techniques, i.e., error control coding schemes \cite{sharma2017polar}, packet duplication \cite{rao2018packet}, and HARQ \cite{strodthoff2019enhanced}, can be used to enhance the URLLC reliability. These enhancement techniques are outside the scope of this work. 
    \item We demonstrate through several numerical examples the efficacy of the proposed scheme where we show that gains of up to 10$~$dB can be achieved in comparison to the code-based puncturing technique. At high SNR, the eMBB SER is dominated by error occurring due to {puncturing}, i.e., the impact of channel diminishes. The opposite is true when the similarity increases, that is, the SER is greatly affected by the channel, not the {puncturing}. We also show that the proposed algorithm has low complexity computational time making it an efficient  solution to be used in practice.
\end{itemize}

\subsection{Paper Outline}
The rest of the paper is organized as follows. Section \ref{sec:system_model}
describes the adopted system model. The proposed {puncturing} strategy is described in \ref{sec:ProposedSolution}. Section \ref{sec:evaluation} provides performance analysis of the proposed strategy where closed-form expressions for the SER for both eMBB and URLLC users are derived. Numerical and simulations results are shown
in \ref{sec:results}. We conclude the paper in Section \ref{sec:conclusion}.

\section{ System Model}\label{sec:system_model}
\begin{figure}[!t]
	\centering
	\center{\includegraphics[width=1\columnwidth,draft=false]
		{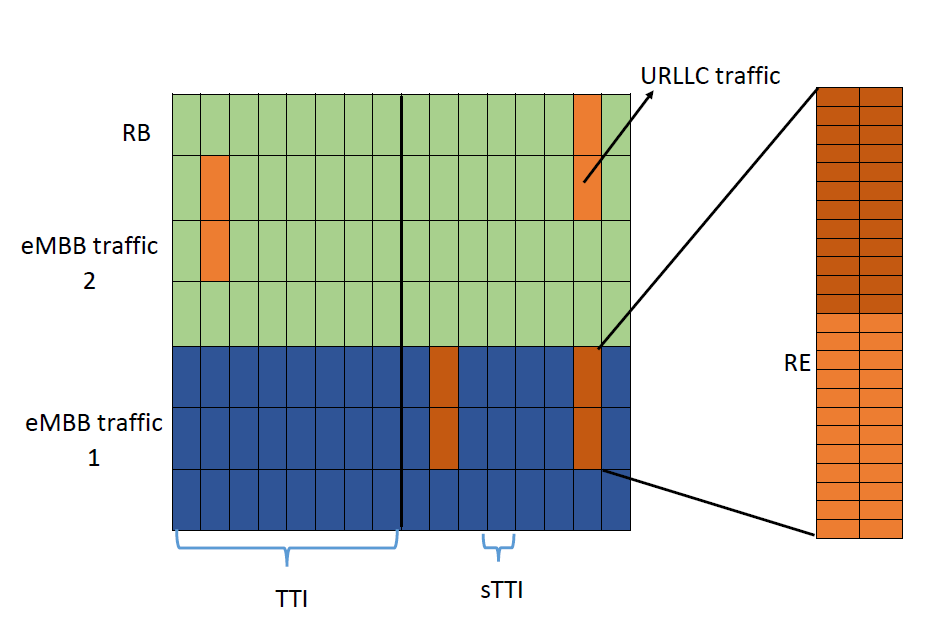}}
	\caption{Relation between frequency resources and {puncturing} mechanism.}
	\label{Fig:spectrum_resources}
	\vspace{-.0 in}
\end{figure} 
We consider a downlink wireless system consisting of one BS that serves certain eMBB and URLLC traffics simultaneously. The system bandwidth is partitioned into $L$ equally sized frequency resources, where each frequency resource is referred to as a resource element (RE). Each $12$ REs constitute a resource block (RB) that is equivalent to $180$ KHz. The time domain is divided into slots, also known as transmission time intervals (TTIs). The duration of each TTI is $1$ ms. To support the low latency requirement of the URLLC traffic, each TTI is further divided into mini-slots, also known as small TTIs (sTTIs), where the duration of each sTTI is $0.143$ ms \cite{joint}. The REs are assigned to the eMBB traffic at the beginning of each TTI, while the URLLC traffic arriving at each sTTI is directly transmitted in the next sTTI by puncturing the REs belonging to the eMBB load. Each URLLC packet is divided into blocks of $\zeta$-symbols, with $\zeta \geq 1$, and it is allocated within one sTTI. Each eMBB receiver is assumed to decode its received data without knowledge of the punctured resources, i.e, each eMBB receiver is assumed to be unaware of the punctured resources at the transmission. Accordingly, the puncturing overhead is reduced. Based on this assumption, each eMBB receiver decodes its received data according to its decoder. Let $m\in\{2, 4, ...,{M}\}$ denote the order of the quadrature amplitude modulation (QAM) scheme adopted for the eMBB traffic with symbol error probability $P_m(\gamma_{\rm e})$, where $\gamma_{\rm e}$ denote the received eMBB signal-to-noise ratio (SNR). In addition, let $n\in\{2, 4, ...,{N}\}$ denote the order of the QAM scheme adopted for the URLLC traffic and let $ \epsilon_{\rm u} $ denote its target symbol error probability.\footnote{By definition, both $m$ and $n$ are powers of $2$.} Practically, the URLLC modulation order $n$ is low due to the following reasons: 1) the lack of accurate channel estimation due to latency constraints; 2) the URLLC traffic is assumed to be small, so the achievable capacity follows the short-block regime; and 3)  the high reliability constraint of the URLLC traffic, which means very low SER. We list in Table \ref{Table:parameters} most of the variables used in the analysis throughout the paper. 
 \begin{table}[!t]
 \caption{\label{Table:parameters} List of Variables Used in the Analysis.}
\centering 
\begin{tabular}{l l}
 \hline
Symbol& Description     \\
 \hline
 $\Omega$& similarity region  \\
 $n$& URLLC modulation order\\
 $m$& eMBB modulation order\\
 $L$& BS downlink frequency resources  \\
 $l$& average URLLC traffic   \\
 $\gamma_{\rm e}$& eMBB Signal to Noise Ratio  \\
  $\gamma_{\rm u}$& URLLC Signal to Noise Ratio \\
 $L_m$& eMBB frequency resources with modulation order $m$  \\
 $l_{n,m}$&  {punctured} eMBB symbols of modulation order $m$ \\
 &by URLLC traffic of modulation order $n$\\
 $\mathcal{L}_{n,m}$&  effectively {punctured} eMBB symbols of modulation order $m$ \\ 
 &by URLLC traffic of modulation order $n$\\
  $\overline{\mathcal{L}}_{n,m}$&  non-effectively {punctured} eMBB symbols of modulation order $m$\\ 
  &by URLLC traffic of modulation order $n$\\
 $p_m$&   the probability of encoding eMBB with modulation order $m$  \\
 $P(.)$& eMBB traffic symbol SER  \\
 $\mathcal{P}(.)$& URLLC traffic symbol SER  \\
 $\zeta$& URLLC block size \\
 $U_{n,m}(.)$& average similar symbols\\
 ${K}$& similarity search space\\
  $s_u$& URLLC symbol  \\
 $s_e$& eMBB symbol  \\
 \hline
\end{tabular}
\end{table}	

\section{Proposed {Puncturing} Scheme}\label{sec:ProposedSolution}
\subsection{Rationale}
The main idea of the proposed puncturing strategy is to exploit the similarity between the symbols of the URLLC block and the symbols of the eMBB load such that the punctured eMBB symbol is similar to the transmitted URLLC symbol. Instead of puncturing the eMBB traffic randomly or greedily, we can search through the eMBB information block to exploit the eMBB sequence that has the highest similarity to the URLLC data block. Fig. \ref{Fig:spectrum_resources} illustrates the mechanism of the proposed scheme in terms of frequency and time resources. At each mini-slot, one can transmit two orthogonal frequency division multiplexing (OFDM) symbols per RE \cite{ghosh20185g}. Hence, if we consider a wireless network with $100$ RBs, then a total of $100 \times 12 \times 2 = 2400$ ODFM symbols can be transmitted in one sTTI. The BS searches for similarity between the URLLC sequence with the ongoing $2400$ eMBB symbols and it allocates the URLLC traffic over the eMBB sequence that has the maximum similarity. For example, assume that the URLLC block length is $2$ RB, i.e., $2\times 12 \times 2 = 48$ OFDM symbols, and that the search window (step) is one RB. Then, the proposed algorithm evaluates the similarity between the URLLC sequence and ${K}=99$ possibilities, where ${K}$ is the search space. As a result, the BS punctures the eMBB sequence that has the maximum similarity to the URLLC sequence.\\
\indent For more elaboration, let us assume that both eMBB and URLLC services employ binary phase shift keying (BPSK) modulation and that the transmitted URLLC symbol is \textbf{0}. Then the punctured eMBB symbol can be either \textbf{0} or \textbf{1}. If the punctured eMBB symbol is \textbf{0}, then the transmitted URLLC symbol and the punctured eMBB symbol are similar, and therefore, and therefore, the error probability of the eMBB symbol is not affected by the puncturing scheme. However, if the punctured eMBB symbol is \textbf{1}, then the eMBB symbol will be received erroneously with probability {($P(0)=1-P(1)= 1-P_2(\gamma_{\rm e})$)}. Therefore, it is recommended to puncture the eMBB traffic that has the maximum similarity to the URLLC traffic. Intuitively, increasing the similarity between the transmitted URLLC symbols and the punctured eMBB symbols will reduce the symbol error rate at the eMBB receiver, which reduces retransmissions and PI overhead. \\
\indent In practice, the modulation schemes used by the eMBB and the URLLC traffics can be different. In addition, the eMBB receiver, which is unaware of the punctured part of the transmission, decodes the received signal using a maximum likelihood receiver. Based on this, the probability of receiving the punctured eMBB symbols in error depends on the Euclidean distance between both the transmitted URLLC symbols and the punctured eMBB symbols. As an illustration, let us consider the case when the URLCC traffic employs BPSK modulation and the eMBB traffic employs 4-QAM modulation and let us suppose that the transmitted URLLC symbol is \textbf{\{0\}}. As shown in Fig.~\ref{Fig:Different_modulation_scheme}, it is preferred to puncture the eMBB symbols \textbf{00} and \textbf{01}, since they have the lowest Euclidean distance to \textbf{0} as compared to the symbols \textbf{10} and \textbf{11}, and therefore, a lower resulting error probability. Specifically, as shown in Fig.~\ref{Fig:Different_modulation_scheme}, we can say that the symbols \textbf{\{0, 00, 10\}} belong to the same region which so-called the similarity region according to the following definition:
    \begin{figure}[t]
	\centering
	\center{\includegraphics[width=.8\columnwidth,draft=false]
		{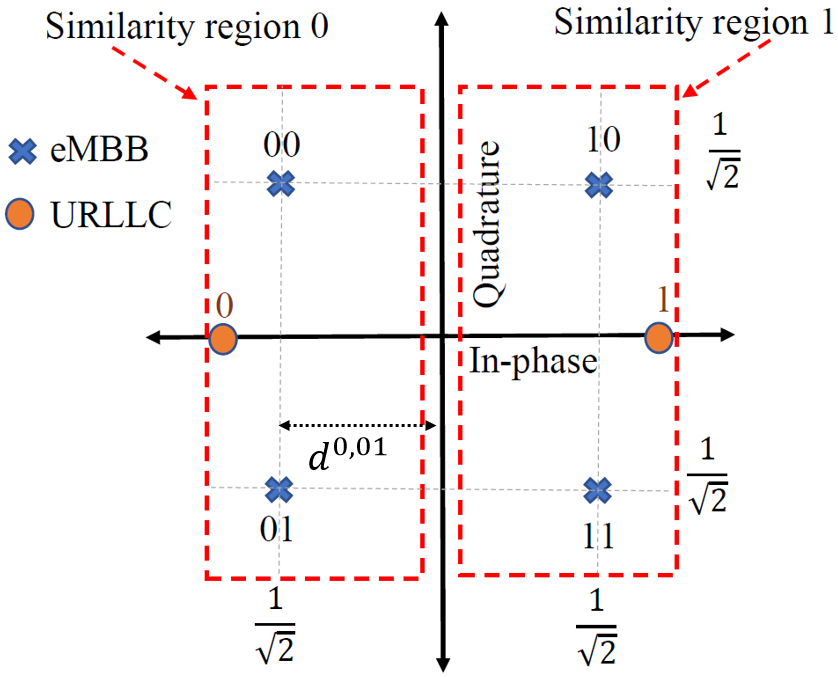}}
	\caption{Example for different modulation schemes and similarity regions.       $d^{0,01}$ is the minimum distance between the eMBB symbol 01 and the decision line of the URLLC symbol 0. }
	\label{Fig:Different_modulation_scheme}
\end{figure}
\begin{definition}
Let us consider two QAM schemes with modulation orders $m$ and $n$, respectively, and let us consider the diagram that has the superposition of their respective constellations. The similarity region of the two above modulation schemes is a region of the resulting constellation diagram that contains only one constellation point from the modulation that has the lowest order, i.e. $\min(m,n)$, and $\frac{\max(m,n)}{\min(m,n)}$ constellation points from the modulation that has the highest order, i.e., $\max(m,n)$, \textbf{which have the minimum Euclidean distance with the constellation point of the modulation that has the lowest order}. Based on this, there exist exactly $\min(m,n)$ similarity regions.
\end{definition}
As an illustration for Definition 1, let us consider the case when the eMBB traffic has a modulation order of $m=4$ and the URLLC traffic has a modulation order of $n=2$. The superposition of the constellations diagrams of the eMBB and URLLC modulations is shown in Fig.~\ref{Fig:Different_modulation_scheme}. The resulting diagram can be divided into $\min(m,n) = 2$ similarity regions, namely, similarity region 0 and similarity region 1, where each similarity region contains only one constellation point from the URLLC's modulation and $\frac{\max(m,n)}{\min(m,n)} = 2$ constellation points from the eMBB modulation that have the lowest Euclidean distance with the included constellation point of the URLLC modulation. Moreover, according to Definition 1, the {eMBB and URLLC symbols are divided into several sets and each set consists of several eMBB and URLLC symbols. The number of eMBB and URLLC symbols depends on the relation between the modulation orders of the eMBB and URLLC.} In practice, the modulation schemes of the URLLC and eMBB services may have the same order (i.e, $m=n$) or different ones ($m\neq n$). Accordingly, we classify the relationship between the eMBB and URLLC modulation orders into the following classes:
 	\begin{figure}[t] 
           		\centering
		\center{\includegraphics[width=1\columnwidth,draft=false]
		{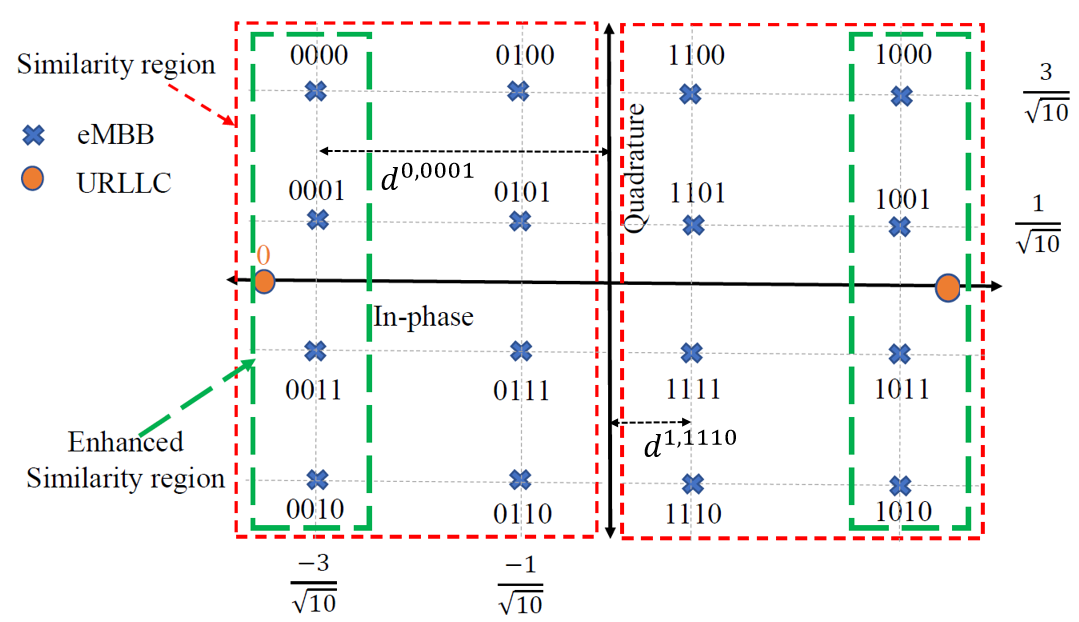}}
	\caption{Similarity region between eMBB traffic and URLLC load with 16-QAM (for eMBB) and BPSK (for URLLC). $d^{0,0001}$ is the minimum distance between the eMBB symbol 0001 and the decision line of the URLLC symbol 0.}
	\label{Fig:Similarity_region}
	\vspace{-.0 in}
\end{figure}
   \begin{itemize}
       \item Similar-Modulation-Order: In this case, the URLLC and eMBB have the same modulation order, i.e., $m=n$. Hence, each similarity region consists of one eMBB symbol and one URLLC symbol. This symbol is named as the \textit{Region-index-symbol}. 
              \item Lower-URLLC-Modulation-Order: In this case, the URLLC modulation order is lower than that of the eMBB, i.e., $m>n$. Accordingly, each similarity region consists of one URLLC symbol and $\frac{m}{n}$ eMBB symbols. Similarly, we can rename the URLLC symbol as the \textit{Region-index-symbol} and the eMBB symbols as \textit{mapping-symbols}.
       \item Higher-URLLC-Modulation-Order: In this case, the URLLC modulation order is higher than that of the eMBB, i.e., $m<n$. Accordingly, each similarity region consists of one eMBB symbol and $\frac{n}{m}$ URLLC symbols. We can rename the eMBB symbol as the \textit{Region-index-symbol} (that denotes the similarity-region) and the URLLC symbols as \textit{mapping-symbols}.
   \end{itemize}
   Practically, the Euclidean distance between the \textit{mapping-symbols} and the \textit{Region-index-symbol} varies according to their locations on the constellation. For clarity, as shown in Fig. \ref{Fig:Similarity_region}, assume the URLLC and eMBB traffic are modulated by the BPSK and 16-QAM, respectively. According to Definition 1, the constellation is divided into two similarity regions. The first region has URLLC symbol $0$ as the \textit{Region-index-symbol} and the second region has the URLLC symbols $1$ as the \textit{Region-index-symbol}. Without loss of generality, let the transmitted URLLC symbol be $0$. Then, the \textit{mapping-symbols} belonging to the same similarity region, i.e. \textbf{\{0000, \dots, 0111\}}, can be treated as \textbf{0} based on the BPSK maximum-likelihood receiver. On the other hand, the eMBB receiver receives the transmitted symbol correctly; hence the eMBB SER is not degraded. On the other hand, the SER of the URLLC becomes worse, as the symbol energy varies based on the eMBB constellation, which is 16-QAM in this example. We note that symbols \textbf{\{0000, 0010, 0011, 0001\}} have the lowest Euclidean distance with URLLC symbol $0$, hence they have lower SER at the URLLC receiver (See Fig. \ref{Fig:Similarity_region} for more elaboration.) In light of the above discussion, we define the symbol similarity as follows.
   \begin{definition}
   The similarity relation between the Region-index-symbol $s_{x}$ and the mapping-symbol $s_y$ in the same similarity region can be:
\begin{itemize}
    \item Absolute-similar: if $ P(\hat{s}\neq s_{x}|s_{x}\text{ was sent})-P(\hat{s}\neq s_{x}|s_y\text{ was sent})\ge 0$.
    \item Strongly-similar: if  $-\epsilon\le P(\hat{s}\neq s_{x}|s_{x}\text{ was sent})-P(\hat{s}\neq s_{x}|s_y\text{ was sent})<0$.
    \item Weakly-similar: if $P(\hat{s}\neq s_{x}|s_{x}\text{ was sent})-P(\hat{s}\neq s_{x}|s_y\text{ was sent})<-\epsilon$,
\end{itemize}
where $\epsilon\approx0$ depends on the target URLLC SER. Accordingly, we can call the set of symbols, which are absolute-similar and strongly-similar, as the enhanced similarity region.
   \end{definition}
\begin{definition}
The enhanced similarity region is a subset of the similarity region which includes the Region-index-symbol and mapping-symbols that satisfy $P(\hat{s}\neq s_{x}|s_y\text{ was sent}) -P(\hat{s}\neq s_{x}|s_{x}\text{ was sent})\le \epsilon$.
\end{definition}
\subsection{ URLLC Encoding at the BS}
According to the proposed puncturing strategy, it is preferred to puncture eMBB symbols such that the amount of symbol mismatch between the transmitted $\zeta-$symbols of the URLLC traffic and the punctured eMBB is minimized, i.e., smaller Hamming distance. Based on the URLLC-eMBB relationship, the encoding at the BS is illustrated as follows.
 \begin{itemize}
\item Similar-Modulation-Order: The BS encodes the URLLC traffic according to the desired modulation order $n$ while puncturing the eMBB symbol sequences that has maximum similarity (Absolute-similar), i.e., maximize the similar eMBB-URLLC OFDM symbols.  
\item Lower-URLLC-Modulation-Order: Similar to the Higher-URLLC-Modulation-Order case, the BS selects for puncturing the eMBB block that has a maximum number of absolute-similar, strongly-similar, and weakly-similar symbols. To accommodate the URLLC traffic, the BS can transmit either the encoded URLLC symbol or the ongoing eMBB symbol, as described below.
\item Higher-URLLC-Modulation-Order: the BS encodes the URLLC traffic according to the desired modulation order $n$ while puncturing the eMBB sequences that have a maximum similarity. In other words, the BS selects the eMBB symbol sequence that maximizes the number of absolute-similar, strong-similar, and weak-similar symbols. Compared to the Similar-Modulation-Order case, the impact of puncturing on the eMBB SER can not be eliminated.
\end{itemize}

 When the URLLC modulation order is lower than that of the eMBB, the BS can transmit the URLLC symbol or keep the ongoing eMBB symbol, as follows.
     \begin{itemize}
            \item URLLC mapper: The BS transmits the encoded URLLC symbols. Hence, the impact of puncturing on the eMBB resources can not be eliminated. To elaborate, let us consider the following example. If we have the following URLLC sequence \textbf{\{0, 1, 1, 0\}}, and the punctured eMBB sequence is \textbf{\{00, 11, 10, 01\}} (See Fig. \ref{Fig:Different_modulation_scheme}). According to the similarity region definition, these URLLC symbols are in the same similarity region of the punctured eMBB sequence. Assuming maximum-likelihood detection, we can roughly say 50\% of the punctured eMBB symbols will be correctly received, which translates to a high SER at the eMBB receiver (Fig. \ref{Fig:Different_modulation_scheme}.) 
            \item Similarity region mapper (SRM): To overcome the high SER of eMBB using the URLLC mapper, the BS transmits the eMBB symbol instead of the URLLC symbol, if they belong to the same similarity region, otherwise the URLLC symbol is transmitted. For example, assume that the modulation schemes of URLLC and eMBB are BPSK and 16-QAM, respectively, as shown in Fig. \ref{Fig:Similarity_region}. Also assume that the transmitted URLLC symbol is \textbf{0}. Then, any eMBB symbol belonging to the same similarity region, i.e. \textbf{\{0000, \dots, 0111\}}, will be received as \textbf{0} at the URLLC receiver with error probability less than \textbf{1}. On the other hand, the eMBB user receives the transmitted symbol correctly as the symbol is not affected by the puncturing process; hence the eMBB SER will improve. On the other hand, the SER of the URLLC becomes worse, since the eMBB symbols have different minimum distances from the URLLC decision boundary, which is 16-QAM in this example. 
            \item Enhanced Similarity region mapper (ESRM). To solve the high SER of the URLLC of the SRM, only the eMBB symbol belonging to the same enhanced similarity region, (that have better minimum distances from the URLLC decision boundary), i.e \textbf{\{0000, 0010, 0011, 0001\}}, are transmitted instead the URLLC symbol, otherwise the URLLC symbol is transmitted.(See Fig. \ref{Fig:Similarity_region} for more elaboration.)
    \end{itemize}
\begin{algorithm}[t!]
\SetAlgoLined
evaluate $\mathcal{P}_1 $ \;
       \uIf{$\mathcal{P}_1>\epsilon_u$}{
encode the URLLC packet using URLLC mapper\;
       }
  \Else{
    keep transmitting the eMBB symbols that satisfy the similarity conditions in Definition 2\;
  }
 \caption{Proposed SRM/ESRM mapper }
  \label{ALG:ENCODER}
\end{algorithm}    

Algorithm 1 illustrates an example for the mechanism of the SRM/ESRM. Let $\epsilon_u$ and $\mathcal{P}_1$ be the targeted SER of the URLLC traffic and the expected SER if the SR/ESRM is used, respectively. The algorithm starts by checking the activation condition $\mathcal{P}_1\le\epsilon_u$, if the activation condition is satisfied, the eMBB symbol is transmitted if they are  satisfying the similarity condition, otherwise the URLLC symbol will be sent.

 \subsection{ URLLC and eMBB Decoding}
	As mentioned above,  the eMBB receiver does not have the knowledge of the punctured symbols and it decodes the received sequence as if no puncturing took place. On the other hand, the URLLC receiver will perform the decoding process normally based on their modulation scheme. 
	For illustration, as shown in Fig. \ref{Fig:Different_modulation_scheme}, we assume that BPSK and QPSK are used to encode both the URLLC and eMBB traffic, respectively. Let the transmitted URLLC symbol be \textbf{\{0\}}. The eMBB decoder will translate the received symbol with probability($\approx50\%)$ as \textbf{\{00, 01\}}. Similarly, the URLLC receiver will decode the received symbol \textbf{\{00\}} as \textbf{\{0\}} with probability $(1-P_2(0.5\gamma_{\rm e}))$. {For the case when the ESRM is used, i.e., keep transmitting the similar eMBB symbols, the URLLC receiver will decode the received signal assuming the desired decoder, BPSK in this example. In summary, the decoding process at either the eMBB or the URLLC receivers is not changed.}


\section{Performance evaluation}\label{sec:evaluation}
\subsection{eMBB SER Analysis}
{Although both SER and bit error rate (BER) can be used to represent the impact of puncturing on the eMBB, the SER can represent the puncturing in the RE (symbol) level instead of the RB level which gives more sense about the proposed strategy.}
Hence, we use the SER, denoted by $P $, to measure the impact of the proposed puncturing strategy on the eMBB traffic. As  Without loss of generality, let $L_m=p_m L$ be the average number of eMBB symbols with modulation scheme $m$. Also, let the average number of eMBB symbols punctured due to the URLLC traffic with modulation order $n$ be $l_{n,m}$. According to the total probability theorem, the SER of the eMBB traffic under the effect of both the wireless channel and the presence of URLLC load can be expressed as
	
	\begin{equation}\label{eq:basic_ser}
	    \begin{split}
 P\left(\gamma_{\rm e},l\right)=\sum_{m=2}^{M}p_m\times\Bigg[   P_{m}(\gamma_{\rm e})\times \left(1-\frac{\sum_{n=2}^{N}l_{n,m}}{L_m}\right)\\+\sum_{n=2}^{N} P_{n,m}\left(\gamma_{\rm e},l_{n,m}\right) \times\frac{l_{n,m}}{L_m}\Bigg],\hspace{2 cm}	        
	    \end{split}
	\end{equation}
	where $P_{m}(\gamma_{\rm e})$ is the SER of the eMBB traffic with modulation order $m$ due to the channel error only, and $P_{n,m}(\gamma_{\rm e},l_{n,m})$ is the SER due to the channel and URLLC traffic with modulation order $n$. To quantify the actual effect on the eMBB traffic, we start with the following definition.

 \textbf{Definition 4:} \textit{ Consider an eMBB and URLLC traffic with modulation orders $m$ and $n$, respectively. The average effectively punctured symbols}, $\mathcal{L}_{n,m}$, of eMBB traffic is a portion of the punctured eMBB symbols, $l_{n,m}$, in which the transmitted URLLC symbol, $s_u$, has a different similarity region from that of the punctured eMBB symbol, $s_e$, and its range is $0\le \mathcal{L}_{n,m} \le l_{n,m}$.
\begin{figure}[!t]
\vspace{-.3 in}
	\centering
	\center{\includegraphics[width=1\columnwidth,draft=false]
		{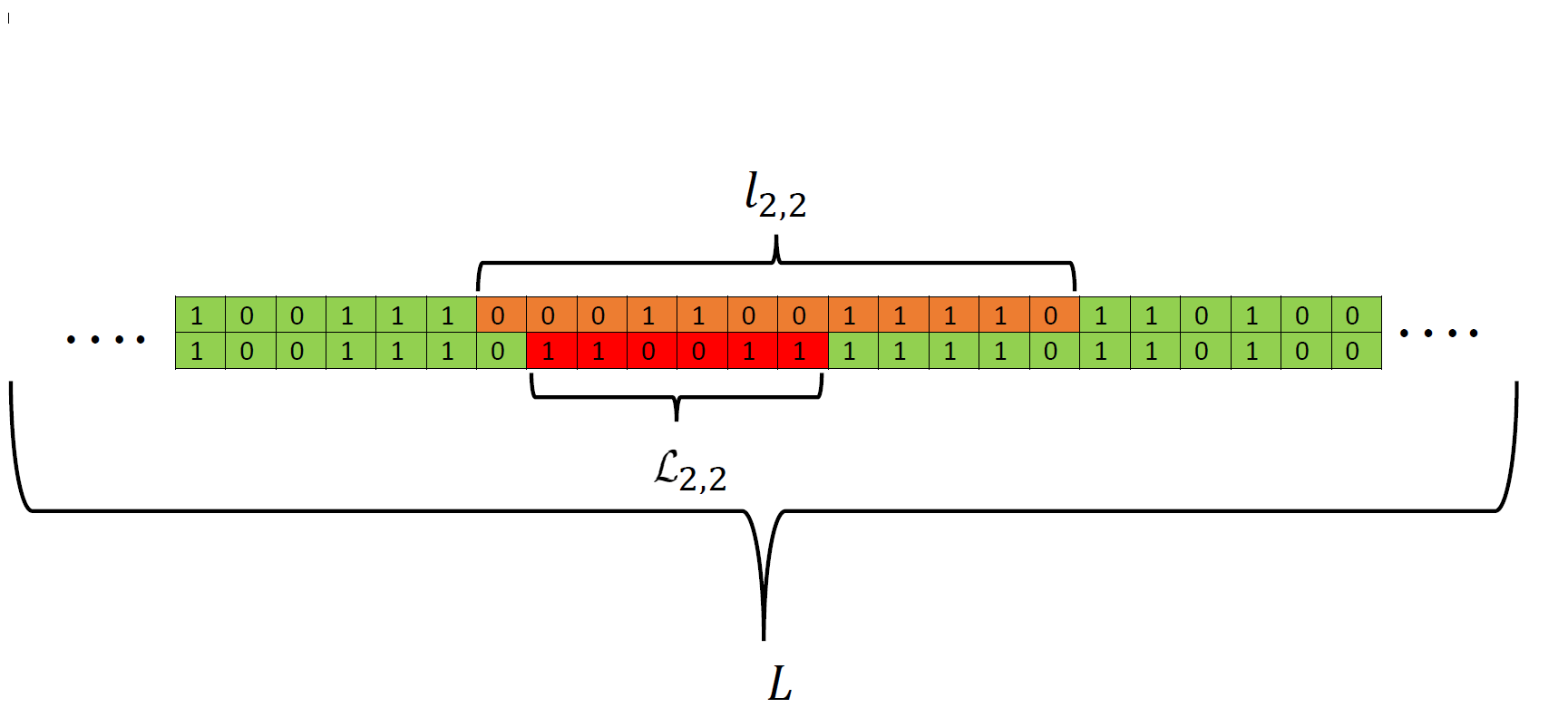}}
	\caption{Effectively punctured eMBB symbols.}
	\label{Fig:leff}
	\vspace{-.1 in}
\end{figure} 
 Fig. \ref{Fig:leff} illustrates the relation between $\mathcal{L}_{n,m}$ and $l_{n,m}$ for the case of similar modulation schemes, BPSK in this example. It shows that, due to the similarity region between the punctured eMBB and URLLC symbols, some of the punctured eMBB symbols are not affected by the puncturing process. In general, for eMBB and URLLC traffic with modulation orders $m$ and $n$, respectively, the expected number of the \textit{effectively eMBB punctured} symbols is $\mathcal{L}_{n,m}<=l_{n,m}$. Accordingly, (\ref{eq:basic_ser}) can be written as follows.
 \begin{equation}\label{eq:ser_l_eff}
\begin{split}
 P\left( \gamma_{\rm e},l \right)= \sum_{m=2}^{M} p_m\times\Bigg[  P_{m}\left(\gamma_{\rm e}\right)\times \left(1-\frac{\sum_{n=2}^{N}l_{n,m}}{L_m}\right)
    +\quad\\ \sum_{n=2}^{N}\left( {P}_{n,m}(\gamma_{\rm e},\overline{\mathcal{L}}_{n,m}) \times\frac{ \overline{\mathcal{L}}_{n,m}}{L_m}
     +{P}_{n,m}\left(\gamma_{\rm e},\mathcal{L}_{n,m}\right) \times\frac{\mathcal{L}_{n,m}}{L_m}\right)\Bigg],
 \end{split}
  \end{equation}
	where ${P}_{n,m}(\gamma_{\rm e},{\overline{\mathcal{L}}}_{n,m})$ and ${P}_{n,m}\left(\gamma_{\rm e},\mathcal{L}_{n,m}\right)$ are the error probabilities (which will be derived in the sequel) of the eMBB traffic in $\overline{\mathcal{L}}_{n,m}$ and ${\mathcal{L}}_{n,m}$, respectively. Where $\overline{\mathcal{L}}_{n,m}=l_{n,m}-{\mathcal{L}}_{n,m}$.

	Equation (\ref{eq:ser_l_eff}) shows the SER of the eMBB traffic under the impact of the wireless channel and the presence of URLLC load. The first term represents the average error probability for the fraction of eMBB sequence impacted by the channel errors only. The second term represents the average error probability of the punctured eMBB symbols that have the same similarity region to the URLLC symbols. The third term is the average error probability of the effectively punctured eMBB symbols. 

{
}
\subsection{Puncturing Parameters Evaluation}
In this section, we analyze the puncturing parameters ${\mathcal{L}}_{n,m},$ ${P}_{n,m}(\gamma_{\rm e},{\overline{\mathcal{L}}}_{n,m})$ and ${P}_{n,m}\left(\gamma_{\rm e},\mathcal{L}_{n,m}\right)$  for the proposed puncturing strategy. The puncturing parameters depend on the modulation schemes of eMBB and URLLC, and how the URLLC traffic is distributed, i.e., $l_{n,m}$. Without loss of generality, the average punctured eMBB symbols (the average URLLC load) is assumed to be known as it depends on the arrival rate, $\lambda$, of the URLLC traffic. Also, the channel error, $P_m(\gamma_{\rm e})$, depends on the channel and the modulation scheme. For example, the SER of eMBB under additive white Gaussian noise (AWGN) and/or Rayleigh fading with SNR per symbol ${\gamma}_{e}$ are \cite{6620186,goldsmith2005wireless}:
	\begin{equation}\label{eq:channel_ser}
 P_m(\gamma_{\rm e}) \approx
 	 \begin{cases}
 	 4a\, Q(\sqrt{\frac{3 {\gamma}_{e}}{m-1}}),  \hspace{1.18 in}\text{AWGN}, \\ 
 	 2a\, (1-b)-a^2(1-\frac{4b}{\pi}\tan^{-1}(\frac{1}{b})), \text{Rayleigh,}
 	 \end{cases}
 	 \end{equation}
where $a=(1-\frac{1}{\sqrt{m}})$ and $b=\sqrt{\frac{3 {\gamma_{\rm e}}}{2(m-1)+3 {\gamma_{\rm e}}}}$.\\ 
\subsubsection{Average Effectively Punctured Symbols} Without loss of generality, assume the distribution of the symbol similarity between a URLLC block and a punctured eMBB sequence with modulation orders $n\text{ and }m$ follows the Binomial distribution $B~(\zeta,\eta_{n,m})$. Therefore, the average similarity between the two traffic blocks can be defined as $U_{n,m,\zeta}\triangleq\zeta \times \eta_{n,m}$, where $\eta_{n,m}$ (probability that any two symbols are similar) is
\begin{equation}\label{eq:pijpinew}
 \eta_{n,m}=\sum_{j=0}^{n-1}\sum_{i=0}^{m-1}p_j\,  p_i,\forall i,j\in \Omega ,
		\end{equation}
where $p_j$ and $p_i$ are the probabilities of sending symbol $j,i$ of the eMBB and URLLC traffic, respectively. Accordingly, we can represent the average effectively punctured eMBB symbols, in terms of the average URLLC load in the following definition.   

\textbf{Definition 5:} Consider a URLLC and eMBB traffic with modulation orders $n \text{ and } m$, respectively. Also, let the eMBB block length and the average length of punctured eMBB symbols by the URLLC traffic be $L_m$ and $l_{n,m}$, respectively. The average effectively punctured eMBB symbols (i.e., modified) is given by:
\begin{equation}\label{eq:lemma1}
		 {\mathcal{L}}_{n,m}=\left(1-\frac{\mathcal{U}_{n,m}(\zeta,l_{n,m},L_m)}{\zeta}\right)\, l_{n,m} \,(\text{symbols}),\vspace{-.01 in}
\end{equation}
where $\mathcal{U}_{n,m}(\zeta,l_{n,m},L_m)$ is the expected number of eMBB symbols which have the same similarity region with the transmitted $\zeta-$symbols of URLLC block. The term $\frac{\mathcal{U}_{n,m}(\zeta,l_{n,m},L_m)}{\zeta}$ represents the ratio (percentage) of similarity between both services. Then, $\left(1-\frac{\mathcal{U}_{n,m}(\zeta,l_{n,m},L_m)}{\zeta}\right)\, l_{n,m}$ is the actual punctured eMBB symbols.
 
The definition in (5) gives an expression for the average length of the effectively punctured (modified) eMBB symbols. However, it does not quantify the average similarity, $\mathcal{U}_{n,m}(\zeta,l_{n,m},L_m)$, between the URLLC and eMBB sequences. In fact, $\mathcal{U}_{n,m}(\zeta,l_{n,m},L_m)$ depends on the URLLC block size and the eMBB search space of size ${N}<L$. Lemma \ref{lemma2} gives an approximated value for $\mathcal{U}_{n,m}(\zeta,l_{n,m},L_m)$.

\begin{lemma}\label{lemma2} Let $L_m$ denote the average eMBB traffic, and let $l_{n,m}$ be the average number of punctured eMBB symbols. Assume that the URLLC traffic is divided into blocks with $\zeta$-symbols each. An upper bound on the expected similarity between the URLLC and eMBB traffic is given by
\begin{equation}\label{eq:Uij}
	 \mathcal{U}_{n,m}(\zeta,l_{n,m},L_m)=\frac{1}{\lceil\frac{l_{n,m}}{\zeta}\rceil}\sum_1^{\lceil\frac{l_{n,m}}{\zeta}\rceil}\sum\nolimits_{k=0}^{\zeta-1}\left[1-\{F(k)\}^{L_{m}-\zeta}\right],
\end{equation} 
where, $F(k)=\sum_{{j}=0}^{k}\left(\begin{array}{l}{\zeta} \\{j}\end{array}\right) \left( \eta_{n,m}\right)_j(1-\eta_{n,m})^{\zeta-j}$.
\end{lemma}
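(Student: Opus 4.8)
The plan is to model the similarity-search procedure as a sequence of independent trials and compute the expected number of eMBB symbols that land in the correct similarity region. First I would fix a single URLLC block of $\zeta$ symbols and consider the $K = L_m - \zeta$ possible windows (shifts) of the eMBB sequence against which the BS can compare it. For a given window, let $X$ be the number of symbols that are similar (i.e., fall in the same similarity region), and invoke the assumption already made in the excerpt that $X \sim B(\zeta,\eta_{n,m})$, so $\Pr[X \le k] = F(k)$ with $F$ as defined in the statement. Since the BS selects the window maximizing the number of matches, the key quantity is the distribution of the maximum over the $L_m - \zeta$ windows; treating the windows as independent (which is exactly where an inequality, rather than an equality, must enter), $\Pr[\max_{\text{windows}} X \le k] = \{F(k)\}^{L_m - \zeta}$.

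Next I would convert this into an expectation of the maximum via the standard tail-sum identity: for a non-negative integer-valued random variable $Y$, $\mathbb{E}[Y] = \sum_{k \ge 0}\Pr[Y > k] = \sum_{k=0}^{\zeta-1}\bigl(1 - \Pr[Y \le k]\bigr)$, truncated at $\zeta-1$ because $Y \le \zeta$ always. Substituting $\Pr[Y \le k] = \{F(k)\}^{L_m-\zeta}$ gives the expected similarity contributed by one URLLC block as $\sum_{k=0}^{\zeta-1}\bigl[1 - \{F(k)\}^{L_m-\zeta}\bigr]$. Then, since the full URLLC load of $l_{n,m}$ punctured symbols is partitioned into $\lceil l_{n,m}/\zeta\rceil$ blocks, each handled by the same search mechanism, I would average over these blocks, which yields precisely the double sum $\frac{1}{\lceil l_{n,m}/\zeta\rceil}\sum_{1}^{\lceil l_{n,m}/\zeta\rceil}\sum_{k=0}^{\zeta-1}\bigl[1 - \{F(k)\}^{L_m-\zeta}\bigr]$ claimed in \eqref{eq:Uij}.

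The main obstacle — and the reason the lemma is stated as an upper bound rather than an exact formula — is the independence assumption among the $L_m - \zeta$ sliding windows. Consecutive windows overlap in $\zeta - 1$ eMBB symbols, so the counts $X$ for different shifts are strongly positively correlated. I would argue that positive association makes the true maximum stochastically smaller than the maximum of independent copies, so $\Pr[\max X \le k]$ for the dependent windows is at least $\{F(k)\}^{L_m-\zeta}$, hence the true expected maximum is at most the expression in \eqref{eq:Uij}; this is what upgrades the identity to the stated inequality. A secondary, milder point to address is that the symbol-similarity events within a block need the Binomial model (already granted in the text as a modeling assumption), and that the per-block averaging is exact only if each block of size $\zeta$ is searched over a window count of $L_m - \zeta$; boundary effects for the final, possibly shorter block can be absorbed into the bound without affecting the leading-order expression.
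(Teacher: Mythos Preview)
Your approach is essentially the paper's: model the per-window match count as $B(\zeta,\eta_{n,m})$, take the maximum over the $L_m-\zeta$ candidate windows under an independence assumption, compute its expectation via the tail-sum $\sum_{k=0}^{\zeta-1}\bigl[1-\{F(k)\}^{L_m-\zeta}\bigr]$, and then average over the $\lceil l_{n,m}/\zeta\rceil$ URLLC blocks. The only difference is that the paper simply \emph{postulates} that the candidate eMBB blocks are i.i.d.\ and invokes standard order-statistic formulas (together with $L_m\gg l_{n,m}$ to ignore the shrinking search space across successive blocks), whereas your positive-association argument for the overlapping sliding windows actually explains \emph{why} the independent-max formula is an upper bound rather than merely an approximation.
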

 \begin{proof}
  See Appendix A for the proof.
 \end{proof}
For large $\zeta$, $\mathcal{U}_{n,m}(\zeta,l_{n,m},L_m)$ reduces to $\zeta \eta_{n,m}$. Hence, we can further reduce the block size, i.e., $\zeta$, and consequently the average similarity will increase. Practically, decreasing the size of $\zeta$ will increase the signalling overhead. Therefore, a proper selection of $\zeta$ is important. Indeed, we examine the effect of different values of $\zeta$ on the performance of the proposed scheme in Section V.\\ 
\subsubsection{SER of the Effectively Punctured Symbols}
 The SER of the Effectively Punctured Symbols, ${P}_{n,m}\left(\gamma_{\rm e},\mathcal{L}_{n,m}\right)$, strictly depends on the relation between the modulation schemes of both URLLC and punctured eMBB traffic. $\mathcal{L}_{n,m}$ is the average number of eMBB symbols that are wrongly transmitted. In other words, the transmitted symbols belong to another region. ${P}_{n,m}\left(\gamma_{\rm e},\mathcal{L}_{n,m}\right)$ is then expressed as
\begin{equation}
 {P}_{n,m}\left(\gamma_{\rm e},\mathcal{L}_{n,m}\right)=\sum_{s_j\in\Omega}\sum_{s_i\notin\Omega} p(s_j|s_i \text{ sent}) P_{err}(s_j|s_i \text{ sent}),
\end{equation}
where $s_j$ and $ s_i$ are the effectively punctured eMBB symbol and the transmitted URLLC symbol, respectively. We can upper bound ${P}_{n,m}\left(\gamma_{\rm e},\mathcal{L}_{n,m}\right)$ with the closed form expression 
 \begin{equation}
     {P}_{n,m}\left(\gamma_{\rm e},\mathcal{L}_{n,m}\right)\le 1-P_n\left(\gamma_{\rm e}\right)\times\left(\frac{1}{m-1}\right)\approx 1,
 \end{equation}
  where $P_n\left(\gamma_{\rm e}\right)$ is the probability of error under the URLLC modulation condition. For example, let BPSK be the modulation order used by both URLLC and eMBB traffic and $P_{2}(\gamma_{\rm e})=10^{-2}$, then  $P_{2,2}(\gamma_{\rm e},\mathcal{L}_{n,m})=1-10^{-2}\approx1$.\\ 
\subsubsection{SER of the non-Effectively Punctured Symbols}
The SER of the non-Effectively punctured symbols, ${P}_{n,m}\left(\gamma_{\rm e},\overline{\mathcal{L}}_{n,m}\right)$ is summarized as follows.
\begin{itemize}
    \item Similar-Modulation-Order: The modulation schemes of URLLC and eMBB are similar. Accordingly, non-effectively punctured symbols are similar to the punctured symbol, hence ${P}_{n,m}\left(\gamma_{\rm e},\overline{\mathcal{L}}_{n,m}\right)$ is expressed as:
    \begin{equation}
      {P}_{n,m}\left(\gamma_{\rm e},\overline{\mathcal{L}}_{n,m}\right)={P}_{m}\left(\gamma_{\rm e}\right)
    \end{equation}
    \item Lower-URLLC-Modulation-Order: similar to Similar-Modulation-Order, in this case, the non-effectively punctured eMBB symbols are only affected by the channel conditions. Hence, ${P}_{n,m}\left(\gamma_{\rm e},\overline{\mathcal{L}}_{n,m}\right)$ is expressed as follows:
    \begin{equation}
      {P}_{n,m}\left(\gamma_{\rm e},\overline{\mathcal{L}}_{n,m}\right)={P}_{m}\left(\gamma_{\rm e}\right)
    \end{equation}
    \item Higher-URLLC-Modulation-Order: As the energy of the transmitted URLLC symbols varies, an exact expression for the URLLC SER is not easy to obtain. Hence, we obtain an upper bound for  ${P}_{n,m}\left(\gamma_{\rm e},\overline{\mathcal{L}}_{n,m}\right)$ based on the minimum distance, ${d}^{i,j}$, the transmitted URLLC symbol and the decision boundary of the eMBB symbol. Accordingly, ${P}_{n,m}\left(\gamma_{\rm e},\overline{\mathcal{L}}_{n,m}\right)$ is expressed as:
    \begin{equation}
 {P}_{n,m}\left(\gamma_{\rm e},\overline{\mathcal{L}}_{n,m}\right)=\sum_{s_j\in\Omega}\sum_{s_i\in\Omega} p(s_j|s_i \text{ sent}) P_{err}(s_j|s_i \text{ sent}),
\end{equation}
where $P_{err}(s_j|s_i \text{ sent})$ is expressed as 
    \begin{equation}
 P_{err}(s_j|s_i \text{ sent})=P_j(\gamma_{\rm e}{{d^{i,j}}^2})
\end{equation}
\end{itemize} 	
\subsection{URLLC SER Analysis}
The SER of the URLLC traffic is only affected when the URLLC traffic is modulated using the SR/ESRM. This is because the the energy of the transmitted symbols are different than the actual URLLC symbols. In other words, the energy of the non-effectively punctured eMBB symbols is varies. Accordingly, an exact expression for the URLLC SER is not easy to obtain. Hence, we drive an upper bound expression for the URLLC SER based on the minimum distance of the transmitted symbol and decision boundary, as 
\begin{equation}\label{eq:ser_urllc}
\begin{split}
   \mathcal{P}_{n,m}\left(\gamma_{\rm u}\right)=\left(1-\sum_{s_j}\sum_{s_i} p\left(s_i|s_j \text{ sent}\right)\right)\mathcal{P}_{n}\left(\gamma_{\rm u}\right)  \\
   +\sum_{s_j}\sum_{s_i} p\left(s_i|s_j \text{ sent}\right) \mathcal{P}_{n}\left(\gamma_{\rm u} {d^{i,j}}^2\right),
\end{split}
\end{equation}
Equation (\ref{eq:ser_urllc}) shows the SER of the URLLC traffic when the SR/ESRM is used. The first term represents the average error probability for the fraction of URLLC sequence impacted by the channel errors only. The second term represents the average error probability of the URLLC symbols that have the same similarity region to the eMBB symbols (encoded by the SRM). {In fact, the SER loss is equivalent to a power loss, $W_{dB}$ which has the following expression:
\begin{equation}\label{eq_urlls_loss_db}
    W_{dB}\approx10 \sum_{s_j}\sum_{s_i} p\left(s_i|s_j \text{ sent}\right) \log_{10}\left( {d^{i,j}}^2/{d^{i}}^2\right).
\end{equation}}
The expression in (\ref{eq_urlls_loss_db}) evaluates the average URLLC loss in dB. The term $\log_{10}\left( {d^{i,j}}^2/{d^{i}}^2\right)$ is the power loss for each URLLC symbol in terms of the ratio between the distance of the URLLC and the transmitted eMBB symbols form the decision boundary.


\subsection{SER Scaling}
In light of the above discussion, we can observe that the eMBB SER is a function the SNR and the average similarity of the punctured eMBB symbols. When the SNR increases, the SER improves and it is asymptotically equal to the puncturing errors. Then, the eMBB SER based on (\ref{eq:ser_l_eff}) is approximated as

	\begin{equation}\label{eq:ser_scale}
 	  P\left(l\right) \approx \sum_{m=2}^{M} p^m
 \frac{	 \sum_{n=2}^{N}{\mathcal{L}}_{n,m}}{L_m}.
 	 \end{equation}
On the other hand, as the similarity increases ($L$ increases or $\zeta$ decreases), the eMBB SER reduces to only channel errors, which can be approximated as
\begin{equation}\label{eq:scaling}
   \begin{split} 
 P\left(\gamma_{\rm e},l\right)\approx\sum_{m=2}^{M}p_m \times \Bigg[  P_{m}\left(\gamma_{\rm e}\right)\times \left(1-\frac{\sum_{n}^{N}\overline{\mathcal{L}}_{n,m}}{L_m}\right)\\+\sum_{n}^{N}{P}_{n,m}\left(\gamma_{\rm e},\overline{\mathcal{L}}_{n,m}\right)\frac{\overline{\mathcal{L}}_{n,m}}{L_m}\Bigg].
    \end{split}
\end{equation}
We observe that the eMBB performance strictly depends on the SNR and the average similarity. As the SNR increases, the eMBB SER becomes dominated by the puncturing errors, while increasing the similarity will reduce the SER to errors due to the channel condition. 

\subsection{eMBB Loss Function}
The function that represents the eMBB loss associated to the puncturing schemes can be either a linear, convex or a threshold function \cite{joint}. The linear loss function, i.e. $h(x)=\alpha\, x$, has been widely used to study the impact of the superposition/puncturing scheme \cite{joint,8643428}. The expected loss of an eMBB traffic, based on the linear function, is the ratio between the punctured eMBB symbols to total eMBB symbols:  
 \begin{equation}\label{eq:cost_eMBB}
   E[h\left(l_m\right)]= \frac{\sum_n^N l_{n,m}}{L_m},
 \end{equation}
 where $l_m=\sum_n l_{n,m}$. The loss function in (\ref{eq:cost_eMBB}) is widely coupled with the eMBB rate \cite{8476595,8638930,8932425,8663990,joint,8643428,8746407}. Taking into consideration the  effectively punctured symbols and similar symbols, and using results in (2) and (5), the expected loss in (\ref{eq:cost_eMBB})  can be generalized as follows:
  \begin{equation}\label{eq:cost_Proposed}
   \begin{split}
      E[h\left(l_m\right)]=\frac{1}{L_m}\sum_n^N\bigg( { {P}_{n,m}\left(\gamma_{\rm e},\overline{\mathcal{L}}_{n,m}\right)\overline{\mathcal{L}}_{n,m}}\\
      +{ {P}_{n,m}\left(\gamma_{\rm e},{\mathcal{L}}_{n,m}\right){\mathcal{L}}_{n,m}}\bigg).
   \end{split}
 \end{equation}
For clarity, assume a retransmission-based puncturing is adopted. Then, all punctured eMBB symbols are lost, i.e.,  $E[h\left(l_m\right)]=\frac{l_m}{L_m} .$

\subsection{Proposed Search Algorithm} \label{gh}

{
\begin{algorithm}[t!]
\SetAlgoLined
$c_{k} \leftarrow 0 \, \forall k \in[1, {K}] $\;
\textbf{Step 1:} Similarity weight calculation\;
 \For{ $k=1\rightarrow {K}$}
 {
  \For{$t=1\rightarrow \zeta$}
  { 
  count symbols in the same similarity region\;
  \If{${s}_e^t\,\&\,{s}_u^t\in \Omega$ }
    {$\mathbf{c}_k\leftarrow\mathbf{c}_k+1$\; 
   }
   }
 }
 \textbf{Step 2:} eMBB block selection\;
 $k^{*} \leftarrow \underset{k \in[1, {K}]}{\arg \max } \,c_{k}$\;
 \caption{Proposed search Algorithm}
  \label{alg:1}
\end{algorithm}
}

The latency constraint is a critical factor to maintain QoS of the URLLC service. Therefore, we present a fast search algorithm, of time complexity $O({K})$, that exploits the similarity between the URLLC block over multiple eMBB traffic sequences with different modulation schemes. Initially, the algorithm associates a counter $c_k$ to each eMBB in ${K}$. The subset, ${K}$, of the possible eMBB blocks for puncturing depends on the latency requirement of the URLLC traffic. Specifically,  ${K}$ should be small for the URLLC traffic with strict latency constraint, and a relatively larger ${K}$ otherwise. Moreover, the URLLC  traffic with weak latency constraint implies that several mini-slots (URLLC-slot) is allowed for allocating the URLLC block.\\
\indent As shown in Algorithm \ref{alg:1}, the proposed algorithm has two steps: the first step counts the similar symbols between the eMBB blocks and the URLLC block; the second step selects the suitable eMBB block for puncturing. The first step contains two loops, inner and outer loop, which describe the number of eMBB blocks for possibly punctured and the number of URLLC symbols per block. As illustrated in Algorithm 1, the similarity region of the eMBB symbol, $s_e^t $, is compared with the similarity region of the URLLC symbol, $s_u^t$. Accordingly, the counter $c_k$ is incremented by one when both symbols are in the same similarity region, otherwise it remains not incremented. In the second part, the BS selects the eMBB block that has maximum similarity with the URLLC block. In other words, the punctured eMBB block $k^{*}$  should have a maximum count of similar symbols with the URLLC block.\\
\indent To mitigate adverse impact on eMBB traffic, the URLLC load is segmented into smaller blocks, i.e., 1 RB \cite{8746407}. Hence, Algorithm 3 guarantees that all URLLC segments are delivered in the correct order, i.e., the URLLC symbols sequence is correct. The algorithm initially divides the search space, ${K}$, into ordered and equal subsets, ${K}_1<{K}_2<{K}_3...<{K}_Z$, where $Z$ is the number of URLLC segments.  Then, Algorithm 2 is applied to each segment on the corresponding subset. To enhance the algorithm, the remaining eMBB blocks of subset, ${K}_k$, which satisfy the inequality $k^{*}_z< k_z<{K}_k$, are merged with the next subset, ${K}_{k+1}$, using the $merge()$ function.\\
\indent {\textbf{Search algorithm time complexity :} It can be easily shown that the proposed algorithm has a time complexity of $O({K})$ which make it an efficient and practical solution.  For clarity, the search algorithm (Algorithm 2) consists of two steps, i.e., \textbf{Step 1} and \textbf{Step 2}. \textbf{Step 1} consists of one outer loop and one inner loop of ${K}$ and $\zeta$ iterations, respectively. Hence, for a URLLC block with fixed number of symbols $\zeta$, \textbf{Step 1} has a time complexity of $O( {K})$. \textbf{Step 2} aims to select the maximum counter over ${K}$ elements. In general, \textbf{Step 2} performs ${K}$ comparison which means \textbf{Step 2} also has a time complexity of $O({K})$. As a result, the proposed algorithm (Algorithm 1) has a low time complexity of $O({K})$.} 

{
\begin{algorithm}[t!]
\SetAlgoLined
${k}^*_{z} \leftarrow 0 \, \forall z \in[1, Z] $\;
 \For{ $z=1\rightarrow Z$}
 {
 $k^{*}_z\leftarrow Algorithm\,1({K}_z)  $\;
  $N_{k+1}\leftarrow merge(k^{*}_z+1\rightarrow {K}_{z} , {K}_{z+1})  $\;
 }
 \label{alg:2}
 \caption{Search segmentation Algorithm.}
\end{algorithm}
}

\section{Simulation Results}\label{sec:results}

\subsection{Simulation Setup}
\indent In this section, we carry out various simulations to evaluate the performance of the proposed puncturing strategy. We consider a wireless network which consists of one BS and eMBB and URLLC traffic. We assume that the eMBB traffic belongs to 10 eMBB users and the URLLC packet arrival follows the Poisson distribution with arrival rate, $\lambda$ and each packet size is $\zeta=96$ bits. We also assume that the BS has $L=1200$ downlink frequency resources (RE). We assume the channel between the BS and the eMBB and URLLC receivers is Rayleigh fading, wherein the eMBB channel gain remains constant for two time slots (14-sTTI) and one eMBB block is transmitted within this period. {Moreover, we assume the served users are located in different distance from the BS, hence the path loss is considered with a path loss exponent equals $3$.} The noise at the receiver is assumed to be complex AWGN $\mathcal{CN}(N_0,0)$, where $N_0=10^{-9}$ is the noise power. The BS can use BPSK, 4-QAM, 16-QAM or 64-QAM to modulate the eMBB traffic. Particularly, the BS adopts the modulation order $m\in\{4, 16, 64\}$ such that the channel SER is less than or equal $0.01$, otherwise BPSK is adopted. Moreover, we assume that the CSI of the URLLC  traffic is not available at the BS. Hence, the URLLC traffic is modulated using only BPSK  to achieve the maximum reliability.\\
\indent The performance of the similarity puncturing strategy is evaluated for different transmitting power and arrival rate, i.e., $\lambda=7$ and $\lambda=3.5$ packets per millisecond (p/msec). {we assume that the eMBB users are unaware of the punctured resources, so we consider the code-based puncturing proposed in \cite{8403963,pedersen2017punctured} as a baseline algorithm. Generally, when the eMBB receiver is unaware of the punctured resources of the transmission, the received signal is decoded as useful signal. Resource proportional (RP) placement is used to allocate the URLLC traffic as it gives the optimal solution for the linear loss model \cite{joint}.}\\
\indent{ In the following, we start by showing the advantage of the proposed strategy on the performance of the eMBB traffic in terms of the spectral efficiency, SER and reliability. Second, we investigate the performance of the proposed strategy on the URLLC traffic by considering the URLLC SER and reliability. Finally, we evaluate the computational time of the proposed puncturing strategy. }
    \begin{figure}[!t] 
	\centering
	\center{\includegraphics[width=1\columnwidth,draft=false]
		{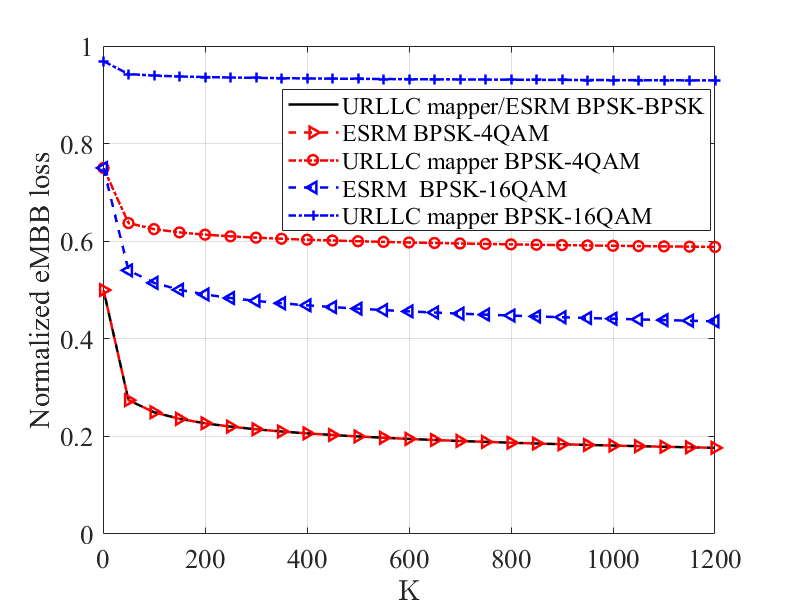}}
	\caption{Average eMBB loss relative to the punctured symbols. Adopted $\zeta=24~ (1~RB)$ and $\gamma_{\rm e}=40~dB$ }
	\label{Fig:loss2}
	\vspace{-.0 in}
\end{figure}
\subsection{eMBB Traffic Performance}
\subsubsection{Spectral Efficiency} In (\ref{eq:cost_Proposed}), we expressed the loss function of the eMBB traffic in terms of the SER of the punctured symbols. To measure the efficiency of the proposed strategy and select the optimal ${K}$, we evaluate the average eMBB loss (contaminated eMBB symbols) for both the URLLC mapper and the ESRM while varying the size of the search space ${K}$ (see Fig. \ref{Fig:loss2}). The results show that the percentages of the contaminated (lost) eMBB symbols for the ESRM are 18\% and 44\% compared to 59\% and 93\% for the URLLC mapper for BPSK-4QAM and BPSK-16QAM, respectively. { This enhancement of the ESRM results from transmitting the punctured eMBB symbols that fall in the same similarity region of transmitted URLLC symbols ( i.e., punctures the eMBB symbols which has different similarity region).} Moreover, the results show that the ESRM for (BPSK-4QAM) achieves the   same loss of (BPSK-BPSK). This because the probability of similarity in the similarity region is the same, i.e., 0.5. {However, the eMBB SER is enhanced for BPSK-4QAM using ESRM, this comes at the expense of the URLLC SER which decreases by 2.5 dB according to equation in  (\ref{eq_urlls_loss_db}).} The results show that the proposed scheme performs better than the code-based puncturing, i.e., ${K}=1$ for URLLC mapper. When ${K}$ is small, it indicates that a small number of eMBB blocks can be punctured, due to the eMBB QoS requirement.  For example, the point when ${K}=300$, the search algorithm scans only $300$ eMBB blocks (possibilities) out of ${K}=1200$ to allocate the URLLC traffic. In other words, the QoS requirements limits ${K}$ (number of eMBB blocks the URLLC is compared with): which means that the smaller ${K}$, the more the eMBB traffic with strict QoS requirements. \\
\begin{figure}[tb!]
\centering
\begin{subfigure}[b]{1\columnwidth}
\centering
\includegraphics[width=1\columnwidth,draft=false]{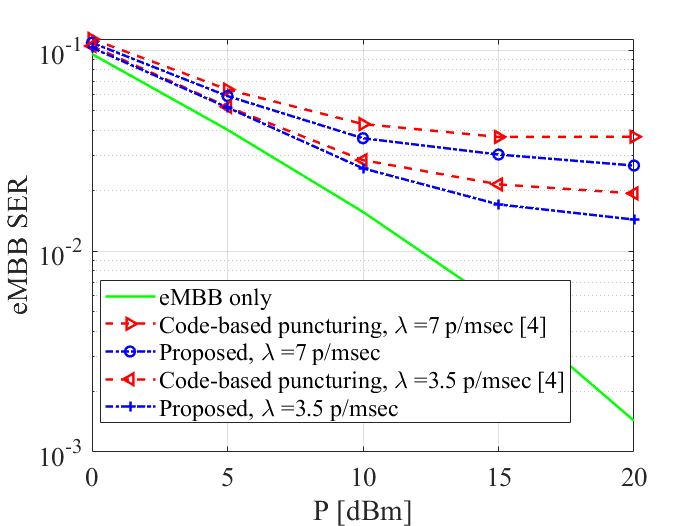}
\caption{eMBB SER vs transmission power in dBm for different URLLC arrival rate.}
\label{fig:ser_embb}
\end{subfigure}\\
\begin{subfigure}[b]{1\columnwidth}
\centering
\includegraphics[width=1\columnwidth,draft=false]{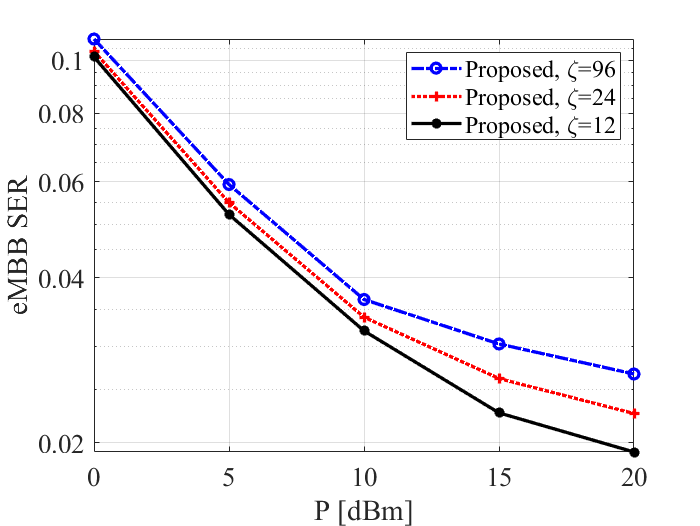}
\caption{eMBB SER vs transmission power in dBm for different URLLC block (segment) size. $\lambda=7~p/msec$ }
\label{Fig:SEG}
\end{subfigure}
\caption{ eMBB SER of the proposed puncturing strategy vs transmission power in dBm.}
\label{fig:ser_per}
\end{figure}
\subsubsection{eMBB SER} Fig. \ref{fig:ser_per} illustrates  the performance of the proposed puncturing scheme in terms of the SER of eMBB. We make the following observations from the results.
\begin{itemize}
    \item The proposed algorithm achieves better SER for the eMBB traffic compared to the code-based baseline. In other words, the proposed scheme can achieve the target eMBB SER with lower transmission power. Particularly, the achieved gain increases with the SNR (transmission power) (it reaches 10 dB at high SNR), as shown in Fig. \ref{fig:ser_embb}. Also, the figure shows that the gain of our proposed method is negligible at low transmission power (low SNR) since the channel errors are the dominant here, however the gain improves as the SNR increases since the error at high SNR is dominated by puncturing. Note also that the gain saturates at high SNR according to (\ref{eq:ser_scale}), with no further improvement as the BS allocates more transmit power for the eMBB symbols (puncturing errors dominates).
    \item It is intuitive that the SER of the eMBB traffic deteriorates as the URLLC load increases, as illustrated in Fig. \ref{fig:ser_embb}. For instance, the eMBB SER saturates at $0.02$ and $0.04$ at both $\lambda=3.5$ and $\lambda=7$, respectively. 
    This increase in SER is due to that as the URLLC load increases, the more eMBB resources are punctured, which leads to more SER. 
    \item  Fig. \ref{Fig:SEG} illustrates the eMBB SER for different URLLC block segment, $\zeta$. For instance, eMBB SER is enhanced when the URLLC block size $\zeta$ becomes smaller, as illustrated in Fig \ref{Fig:SEG}. Reducing $\zeta$ increases the probability of similarity, which decreases the effectively punctured symbols. In this case, the trade-off between reducing $\zeta$ and the overhead should be optimized to achieve better spectral efficiency. In other words, as $\zeta$ decreases, the overhead increases.
    \end{itemize}
\begin{figure}[tb!]
\centering
\begin{subfigure}[b]{1\columnwidth}
\centering
\includegraphics[width=1\columnwidth,draft=false]{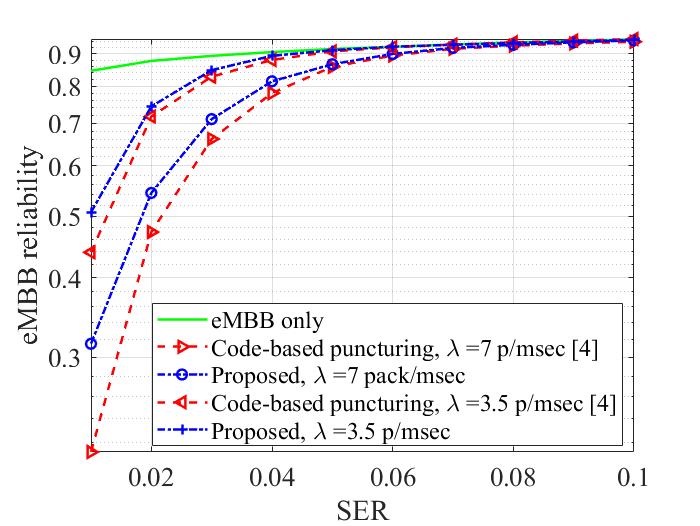}
\caption{eMBB reliability vs the SER.}
\label{fig:embb_rel}
\end{subfigure}\\
\begin{subfigure}[b]{1\columnwidth}
\centering
\includegraphics[width=1\columnwidth,draft=false]{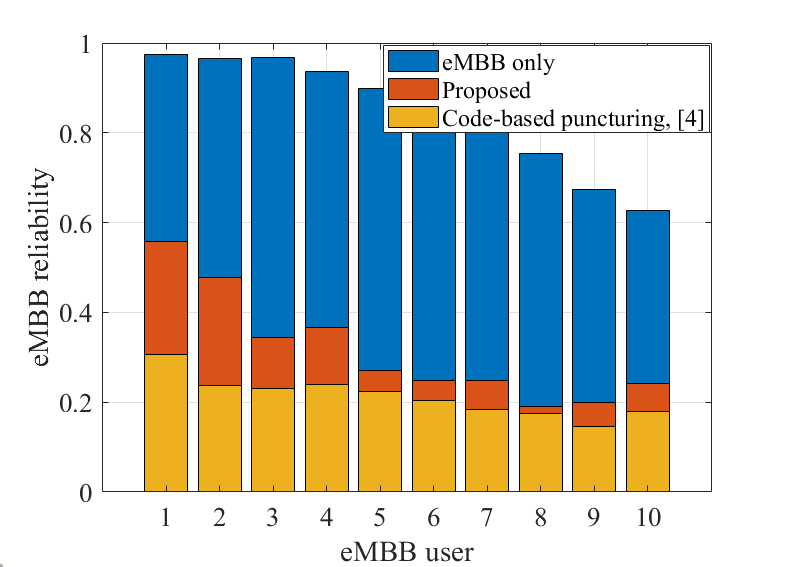}
\caption{ eMBB users reliability. Adopted SER=.01 and $\lambda=7 p/msec$}
\label{fig:users_rel}
\end{subfigure}
\caption{ eMBB traffic reliability of the proposed algorithm.  $P=10~dBm$ }
\label{fig:performanceanalysis_SNR}
\end{figure}
\subsubsection{eMBB Reliability}
In this section, we evaluate the reliability of the eMBB traffic as a function of the achieved SER. In general, we define the reliability of both URLLC and eMBB traffic
\begin{equation}
\text{reliability}=\frac{\text{Number of blocks satisfying the targeted SER}}{\text{Total transmitted blocks}}.
\end{equation}
 Fig.~\ref{fig:embb_rel} presents the eMBB reliability while varying the targeted BER, for different URLLC arrival rates. The figure shows that the proposed puncturing strategy achieves better reliability compared the puncturing baseline. For instance, at $P=0.01,$ the proposed puncturing strategy achieves reliability of 31\% compared to 20\% for the puncturing baseline. This means more eMBB blocks, about 50\% enhancement, are received correctly, hence less re-transmissions and better spectral efficiency. The gain of the proposed algorithm decreases while increasing the targeted eMBB SER, this is because the SER becomes dominated by the channel errors at high SER. Moreover, Fig. \ref{fig:users_rel} presents the reliability for each eMBB user. Compared to the baseline, the figure also shows that the proposed algorithm considerably enhances the reliability of the eMBB users, which means less retransmission for the eMBB traffic.
\subsection{URLLC performance}
In this section, the performance of the URLLC traffic is investigated in terms of the SER and reliability.\vspace{.02 in} 
\subsubsection{URLLC SER}
 Fig. \ref{fig:ser_urllc} illustrates the URLLC SER while varying the transmitted power. As shown in Fig. \ref{fig:ser_urllc}, the proposed puncturing scheme preserves the SER of the URLLC traffic while enhancing the SER of the eMBB traffic. The SER loss of the proposed strategy is negligible while taking into account the coding gain. We emphasize here that according to the target SER or BER for both the URLLC and eMBB, the scheduler can select either the URLLC mapper or the ESRM. 
 \subsubsection{URLLC Reliability}
 Fig.~\ref{Fig:urllc_rel} illustrates the URLLC reliability (success rate) while varying the transmitted power. The figure shows the URLLC reliability for different $\epsilon_u$. The figure shows that the proposed puncturing strategy preserves the URLLC reliability, which makes the proposed strategy a practical method for efficient multiplexing between eMBB and URLLC traffics.
\begin{figure}[t!]
	\centering
	\center{\includegraphics[width=1\columnwidth,draft=false]
		{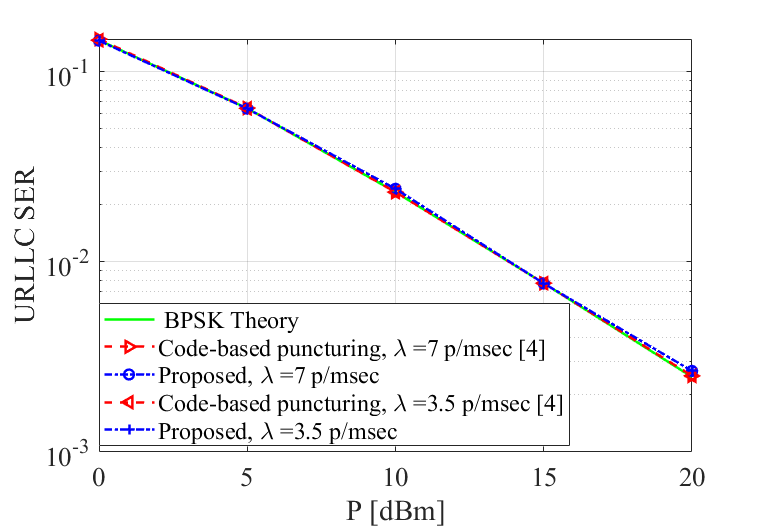}}
	\caption{URLLC SER vs transmission power in dBm.}
	\label{fig:ser_urllc}
	\vspace{-.0 in}
\end{figure} 
\begin{figure}[t!]
	\centering
	\center{\includegraphics[width=1\columnwidth,draft=false]
		{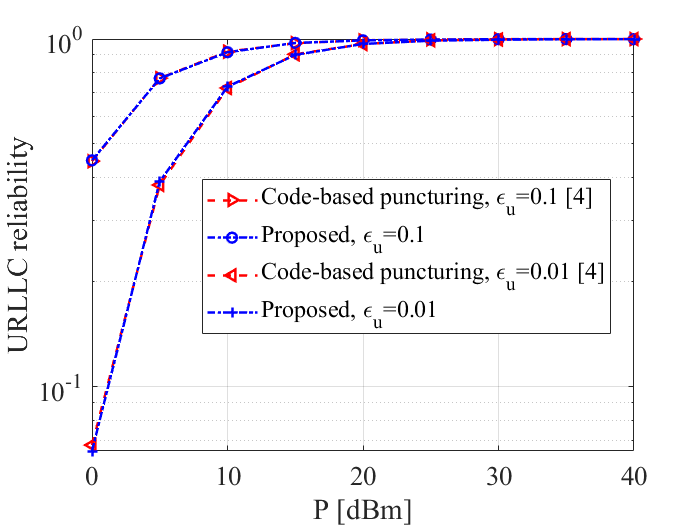}}
	\caption{URLLC reliability versus the transmitting power . $\lambda=7~p/msec$ }
	\label{Fig:urllc_rel}
	\vspace{-.0 in}
\end{figure} 

\subsection{Time Complexity }

\begin{figure}[t!]
	\centering
	\center{\includegraphics[width=1\columnwidth,draft=false]
		{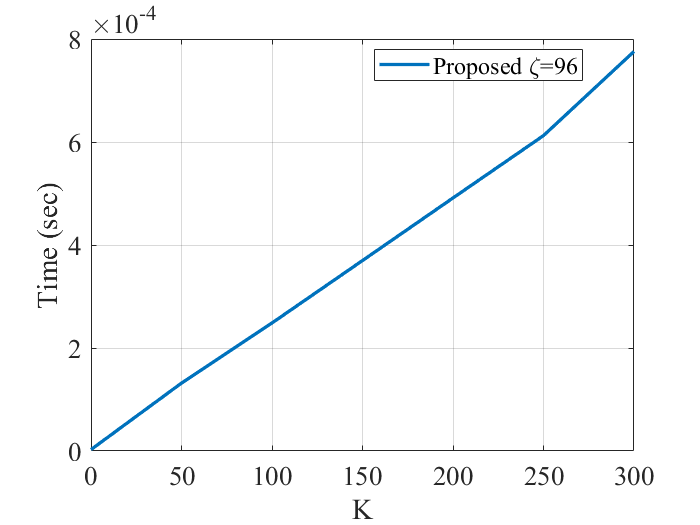}}
	\caption{Time complexity of the proposed algorithm}
	\label{Fig:complexity}
	\vspace{-.0 in}
\end{figure} 
Fig. \ref{Fig:complexity} shows the time complexity of the proposed algorithm. The algorithm was implemented in Matlab using a machine with the following characteristics: System Type: x64-based PC Processor: Intel(R) i7-8700H CPU @3.20GHz. The results show that the processing time of the proposed algorithm is less than $1~ms$. As the cloud radio access network has a powerful computational resources,  the running time of the proposed algorithm will be further reduced; Hence, the latency of the URLLC will be surely met.

 In summary, the proposed puncturing scheme represents a low complexity solution that optimizes between the SER and the spectral efficiency of the eMBB traffic while preserving the reliability of the URLLC services. The proposed algorithm gain, for the eMBB traffic, starts at 0 dB at low SNR and reach up to 10 dB at high SNR, and it enhances the eMBB reliability up to 50\%.

	\section{Concluding Remarks}
	\label{sec:conclusion}
	In this paper, we proposed a downlink puncturing strategy in an effort to reduce the impact of transmitting URLLC traffic simultaneously with eMBB traffic. The proposed strategy mitigates the impact on the eMBB traffic by exploiting the region similarity between the eMBB and URLLC symbols to reduce the effectively punctured eMBB symbols. The introduced strategy covers all relations between the eMBB and URLLC modulation schemes. Throughout the analysis, it was shown that the eMBB SER depends on the channel gain, the URLLC load, and the average similarity between the URLLC and eMBB traffic. At high SNR, the eMBB SER asymptotically saturates to the errors due to puncturing, and it is proportional to the ratio between the effectively punctured eMBB symbols to the total eMBB load. Also, when the URLLC block is small or the search space increases, the eMBB SER reduces to the errors due to the channel. Numerical and simulation results demonstrated that the proposed puncturing strategy enhances the system information rate by doubling the URLLC load for the same SER compared to the baseline. While preserving the URLLC quality of service requirements, the proposed puncturing scheme can achieve gains of up to 10 dB as compared to the baseline scheme. 
	
We believe there is still room for improving the performance of the proposed puncturing scheme.We list here a number of possible extensions that that we plan to undertake in future works. First, we plan to apply the proposed scheme to a coded system in which the eMBB and URLLC streams are coded. The code rate and the targeted block error rate should be incorporated while allocating the URLLC load. Moreover, we plan to propose a proper receiver to extract the effectively punctured eMBB symbols and decodes unaffected symbols. Second, the length of the URLLC block, $\zeta$, can be optimized to achieve better spectral efficiency by addressing the trade-off between the overhead and eMBB SER. Finally, exploiting the eMBB block with high similarity has polynomial time complexity, so it may be more efficient to build a learning model to reduce the complexity of the search algorithm. 

\appendices 
\section{Proof of Lemma 1}
Consider the Binomial distribution with $B~(\zeta,\eta_{n,m})$ to exploit the similar $\zeta-$symbols blocks between the URLLC load and the eMBB sequence. Then, the CDF $F(k)$ is expressed as
	\begin{eqnarray} 
F(k)=\sum_{j=0}^{k}\left(\begin{array}{l}{\zeta} \\{j}\end{array}\right) \left( \eta_{n,m}\right)^j(1-\eta_{n,m})^{\zeta-j}.
\end{eqnarray}
	
The expected number of similar symbols between both the URLLC and eMBB blocks is $\mu=\eta_{n,m}\,\zeta$. Under the assumption that the eMBB blocks and the URLLC packet are $i.i.d$, and by searching over the search space ($K_{m}=L_{m}-\zeta+1$), the order statistic after arranging the random samples in an increasing order is $Y_{1} \leq Y_{2 } \leq \cdots \leq Y_{K_{m} }$. Based on the results of \cite{arnold1992first}, the pmf of $Y_{z }$ becomes for all $k=0,1, \ldots, \zeta$
	\begin{equation}\label{eq:orderstatistic}
	\begin{split}
 	 f_{z }(k)=\sum\nolimits_{r=z}^{K^{m}}\left(\begin{array}{c}{K^{m}} \\ {r}\end{array}\right)\left[\{F(k)\}^{r}\{1-F(k)\}^{K_m-r}\right.\\  	 \left.-\{F(k-1)\}^{r}\{1-F(k-1)\}^{K_{m}-r}\right].  
	\end{split}
\vspace{-.1 in}	
\end{equation}

Considering the case when the largest ordered sample has at least $k$ similar symbols. Also, considering that $L_m>>l_{n,m}$, the expected number of similar symbols can be approximated as \cite{arnold1992first}
\begin{equation}\label{eq:lemma2_proof}
	 U_{n,m,\zeta}=\sum\nolimits_{k=0}^{\zeta-1}\left[1-\{F(k)\}^{L_{m}-\zeta}\right].
\end{equation} 
 Averaging over the number of $\zeta$-blocks in $l_{n,m}$, we arrive at
 \begin{equation}
	 U_{n,m,\zeta}(l_{n,m})=\frac{1}{\lceil\frac{l_{n,m}}{\zeta}\rceil}\sum_1^{\lceil\frac{l_{n,m}}{\zeta}\rceil}\sum\nolimits_{k=0}^{\zeta-1}\left[1-\{F(k)\}^{L_{m}-\zeta}\right].
\end{equation}
This completes the proof.

	\IEEEpeerreviewmaketitle
	
	\bibliographystyle{IEEEtran}
	\bibliography{main.bib}

\begin{thebibliography}{10}
\providecommand{\url}[1]{#1}
\csname url@samestyle\endcsname
\providecommand{\newblock}{\relax}
\providecommand{\bibinfo}[2]{#2}
\providecommand{\BIBentrySTDinterwordspacing}{\spaceskip=0pt\relax}
\providecommand{\BIBentryALTinterwordstretchfactor}{4}
\providecommand{\BIBentryALTinterwordspacing}{\spaceskip=\fontdimen2\font plus
\BIBentryALTinterwordstretchfactor\fontdimen3\font minus
  \fontdimen4\font\relax}
\providecommand{\BIBforeignlanguage}[2]{{%
\expandafter\ifx\csname l@#1\endcsname\relax
\typeout{** WARNING: IEEEtran.bst: No hyphenation pattern has been}%
\typeout{** loaded for the language `#1'. Using the pattern for}%
\typeout{** the default language instead.}%
\else
\language=\csname l@#1\endcsname
\fi
#2}}
\providecommand{\BIBdecl}{\relax}
\BIBdecl

\bibitem{saad2019vision}
W.~Saad, M.~Bennis, and M.~Chen, ``{A Vision of 6G Wireless Systems:
  Applications, Trends, Technologies, and Open Research Problems},''
  \emph{arXiv preprint arXiv:1902.10265}, 2019.

\bibitem{8808168}
K.~B. {Letaief}, W.~{Chen}, Y.~{Shi}, J.~{Zhang}, and Y.~A. {Zhang}, ``{The
  Roadmap to 6G: AI Empowered Wireless Networks},'' \emph{IEEE Commun. Mag.},
  vol.~57, no.~8, pp. 84--90, August 2019.

\bibitem{bennis_risk}
A.~{Anand}, G.~{De Veciana}, and S.~{Shakkottai}, ``{Joint Scheduling of URLLC
  and eMBB Traffic in 5G Wireless Networks},'' in \emph{IEEE INFOCOM 2018 -
  IEEE Conference on Computer Communications}, April 2018, pp. 1970--1978.

\bibitem{8403963}
H.~{Ji}, S.~{Park}, J.~{Yeo}, Y.~{Kim}, J.~{Lee}, and B.~{Shim},
  ``{Ultra-Reliable and Low-Latency Communications in 5G Downlink: Physical
  Layer Aspects},'' \emph{IEEE Wireless Communications}, vol.~25, no.~3, pp.
  124--130, JUNE 2018.

\bibitem{8491078}
Z.~{Li}, M.~A. {Uusitalo}, H.~{Shariatmadari}, and B.~{Singh}, ``5g urllc:
  Design challenges and system concepts,'' in \emph{2018 15th International
  Symposium on Wireless Communication Systems (ISWCS)}, 2018, pp. 1--6.

\bibitem{maaz2018urllc}
D.~Maaz, A.~Galindo-Serrano, and S.~E. Elayoubi, ``Urllc user plane latency
  performance in new radio,'' in \emph{2018 25th Int. Conf. on Telecommun.
  (ICT)}.\hskip 1em plus 0.5em minus 0.4em\relax IEEE, 2018, pp. 225--229.

\bibitem{8472907}
M.~{Bennis}, M.~{Debbah}, and H.~V. {Poor}, ``{Ultrareliable and Low-Latency
  Wireless Communication: Tail, Risk, and Scale},'' \emph{Proceedings of the
  IEEE}, vol. 106, no.~10, pp. 1834--1853, Oct 2018.

\bibitem{park2020extreme}
J.~Park, S.~Samarakoon, H.~Shiri, M.~K. Abdel-Aziz, T.~Nishio, A.~Elgabli, and
  M.~Bennis, ``Extreme urllc: Vision, challenges, and key enablers,''
  \emph{arXiv preprint arXiv:2001.09683}, 2020.

\bibitem{Meeting87}
3GPP, ``{TSG RAN WG1 Meeting 87},'' {3rd Generation Partnership Project
  (3GPP)}, Tech. Rep., November 2016.

\bibitem{pedersen2017punctured}
K.~I. Pedersen, G.~Pocovi, J.~Steiner, and S.~R. Khosravirad, ``{Punctured
  scheduling for critical low latency data on a shared channel with mobile
  broadband},'' in \emph{2017 IEEE 86th Vehicular Technology Conference
  (VTC-Fall)}.\hskip 1em plus 0.5em minus 0.4em\relax IEEE, 2017, pp. 1--6.

\bibitem{7842312}
K.~I. {Pedersen}, M.~{Niparko}, J.~{Steiner}, J.~{Oszmianski}, L.~{Mudolo}, and
  S.~R. {Khosravirad}, ``{System Level Analysis of Dynamic User-Centric
  Scheduling for a Flexible 5G Design},'' in \emph{2016 IEEE Global
  Communications Conference (GLOBECOM)}, Dec 2016, pp. 1--6.

\bibitem{8476595}
P.~{Popovski}, K.~F. {Trillingsgaard}, O.~{Simeone}, and G.~{Durisi}, ``{5G
  Wireless Network Slicing for eMBB, URLLC, and mMTC: A Communication-Theoretic
  View},'' \emph{IEEE Access}, vol.~6, pp. 55\,765--55\,779, 2018.

\bibitem{8638930}
C.~{Xiao}, J.~{Zeng}, W.~{Ni}, X.~{Su}, R.~P. {Liu}, T.~{Lv}, and J.~{Wang},
  ``{Downlink MIMO-NOMA for Ultra-Reliable Low-Latency Communications},''
  \emph{IEEE Journal on Selected Areas in Communications}, vol.~37, no.~4, pp.
  780--794, April 2019.

\bibitem{8932425}
E.~J. {dos Santos}, R.~D. {Souza}, J.~L. {Rebelatto}, and H.~{Alves},
  ``{Network Slicing for URLLC and eMBB with Max-Matching Diversity Channel
  Allocation},'' \emph{IEEE Communications Letters}, pp. 1--1, 2019.

\bibitem{8663990}
A.~{Azari}, M.~{Ozger}, and C.~{Cavdar}, ``{Risk-Aware Resource Allocation for
  URLLC: Challenges and Strategies with Machine Learning},'' \emph{IEEE
  Communications Magazine}, vol.~57, no.~3, pp. 42--48, March 2019.

\bibitem{joint}
A.~{Anand}, G.~{De Veciana}, and S.~{Shakkottai}, ``{Joint Scheduling of URLLC
  and eMBB Traffic in 5G Wireless Networks},'' in \emph{{IEEE INFOCOM 2018 -
  IEEE Conference on Computer Communications}}, April 2018, pp. 1970--1978.

\bibitem{8643428}
M.~{Alsenwi}, N.~H. {Tran}, M.~{Bennis}, A.~{Kumar Bairagi}, and C.~S. {Hong},
  ``{eMBB-URLLC Resource Slicing: A Risk-Sensitive Approach},'' \emph{IEEE
  Commun. Lett.}, vol.~23, no.~4, pp. 740--743, April 2019.

\bibitem{8746407}
A.~{Karimi}, K.~I. {Pedersen}, N.~H. {Mahmood}, G.~{Pocovi}, and P.~{Mogensen},
  ``{Efficient Low Complexity Packet Scheduling Algorithm for Mixed URLLC and
  eMBB Traffic in 5G},'' in \emph{2019 IEEE 89th Vehicular Technology
  Conference (VTC2019-Spring)}, April 2019, pp. 1--6.

\bibitem{9171342}
W.~R. {Ghanem}, V.~{Jamali}, Y.~{Sun}, and R.~{Schober}, ``Resource allocation
  for multi-user downlink miso ofdma-urllc systems,'' \emph{IEEE Transactions
  on Communications}, vol.~68, no.~11, pp. 7184--7200, 2020.

\bibitem{Chraiti_NOMA}
M.~{Chraiti}, A.~{Ghrayeb}, and C.~{Assi}, ``{A NOMA Scheme Exploiting Partial
  Similarity Among Users Bit Sequences},'' \emph{IEEE Trans. on Commun.},
  vol.~66, no.~10, pp. 4923--4935, Oct 2018.

\bibitem{8612914}
R.~{Kassab}, O.~{Simeone}, P.~{Popovski}, and T.~{Islam}, ``Non-orthogonal
  multiplexing of ultra-reliable and broadband services in fog-radio
  architectures,'' \emph{IEEE Access}, vol.~7, pp. 13\,035--13\,049, 2019.

\bibitem{9187217}
N.~B. {Khalifa}, V.~{Angilella}, M.~{Assaad}, and M.~{Debbah}, ``Low-complexity
  channel allocation scheme for urllc traffic,'' \emph{IEEE Trans. on Commun.},
  pp. 1--1, 2020.

\bibitem{manzoor2020contract}
{A. Manzoor \textit{et al.}}, ``Contract-based scheduling of urllc packets in
  incumbent embb traffic,'' \emph{arXiv preprint arXiv:2003.11176}, 2020.

\bibitem{8891466}
W.~{Wu}, P.~{Lin}, and Y.~{Lee}, ``An indicator-free embb and urllc multiplexed
  scheme for 5g downlink system,'' in \emph{2019 IEEE 90th Vehicular Technology
  Conference (VTC2019-Fall)}, 2019, pp. 1--5.

\bibitem{sharma2017polar}
A.~Sharma and M.~Salim, ``Polar code: The channel code contender for 5g
  scenarios,'' in \emph{2017 international conference on computer,
  communications and electronics (Comptelix)}.\hskip 1em plus 0.5em minus
  0.4em\relax IEEE, 2017, pp. 676--682.

\bibitem{rao2018packet}
J.~Rao and S.~Vrzic, ``Packet duplication for urllc in 5g: Architectural
  enhancements and performance analysis,'' \emph{IEEE Network}, vol.~32, no.~2,
  pp. 32--40, 2018.

\bibitem{strodthoff2019enhanced}
N.~Strodthoff, B.~G{\"o}ktepe, T.~Schierl, C.~Hellge, and W.~Samek, ``Enhanced
  machine learning techniques for early harq feedback prediction in 5g,''
  \emph{IEEE Journal on Selected Areas in Communications}, vol.~37, no.~11, pp.
  2573--2587, 2019.

\bibitem{ghosh20185g}
A.~Ghosh, ``5g new radio (nr): physical layer overview and performance,'' in
  \emph{Proc. IEEE Commun. Theory Workshop}, 2018, pp. 1--38.

\bibitem{6620186}
H.~{Soury}, F.~{Yilmaz}, and M.~{Alouini}, ``{Exact Symbol Error Probability of
  Square M-QAM Signaling Over Generalized Fading Channels Subject to Additive
  Generalized Gaussian Noise},'' in \emph{2013 IEEE International Symposium on
  Information Theory}, July 2013, pp. 51--55.

\bibitem{goldsmith2005wireless}
A.~Goldsmith, \emph{{Wireless Communications}}.\hskip 1em plus 0.5em minus
  0.4em\relax Cambridge university press, 2005.

\bibitem{arnold1992first}
B.~C. Arnold, N.~Balakrishnan, and H.~N. Nagaraja, \emph{{A First Course in
  Order Statistics}}.\hskip 1em plus 0.5em minus 0.4em\relax Siam, 1992,
  vol.~54.

\end{thebibliography}
\end{document}